
\documentclass[nohyperref]{article}

\usepackage{microtype}
\usepackage{graphicx}
\usepackage{subfigure}
\usepackage{booktabs} 
\usepackage[utf8]{inputenc} 
\usepackage[T1]{fontenc}    
\usepackage{url}            
\usepackage{amsfonts}       
\usepackage{nicefrac}       
\usepackage{xcolor}         
\usepackage[normalem]{ulem} 
\usepackage{amsmath, amssymb, amsthm}
\usepackage{enumerate}
\usepackage{bbm, bm}
\usepackage{times, enumitem}
\usepackage{mathtools}
\usepackage{upgreek}
\usepackage{caption}
\usepackage{ulem}
\usepackage{multicol}
\usepackage{makecell}

\usepackage{hyperref}



\usepackage[accepted]{icml2023}

\usepackage[capitalize,noabbrev]{cleveref}

\theoremstyle{plain}
\newtheorem{theorem}{Theorem}[section]

\newtheorem{lemma}[theorem]{Lemma}

\theoremstyle{definition}
\newtheorem{definition}[theorem]{Definition}

\theoremstyle{remark}
\newtheorem{remark}[theorem]{Remark}

\usepackage[textsize=tiny]{todonotes}


\makeatletter
\newcommand{\customlabel}[2]{%
   \protected@write \@auxout {}{\string \newlabel {#1}{{#2}{\thepage}{#2}{#1}{}} }%
   \hypertarget{#1}{#2}
}
\makeatother

\newcommand{\E}{\mathbb{E}}
\newcommand{\Pb}{\mathbb{P}}

\newcommand{\var}{\text{var}}

\newcommand{\Pn}{\mathbb{P}_n}

\newcommand{\R}{\mathbb{R}}
\DeclareMathOperator*{\argmin}{arg\,min}


\newcommand{\cmmnt}[1]{\ignorespaces}  


\icmltitlerunning{Toward Fair and Robust Policy Learning}

\begin{document}

\twocolumn[
\icmltitle{Fair and Robust Estimation of Heterogeneous Treatment Effects \\ for Policy Learning}




\begin{icmlauthorlist}
\icmlauthor{Kwangho Kim}{ku,hms}
\icmlauthor{Jos\'{e} R. Zubizarreta}{hms,harvard}
\end{icmlauthorlist}

\icmlaffiliation{hms}{Department of Health Care Policy, Harvard Medical School, MA, USA}
\icmlaffiliation{ku}{Department of Statistics, Korea University, Seoul, South Korea}
\icmlaffiliation{harvard}{Departments of Biostatistics and Statistics, Harvard University, MA, USA}

\icmlcorrespondingauthor{Jos\'{e} R. Zubizarreta}{zubizarreta@hcp.med.harvard.edu}

\icmlkeywords{causal inference, conditional average treatment effect, observational study}

\vskip 0.3in
]



\printAffiliationsAndNotice{The authors would like to thank Youmi Suk for many helpful discussions. The proposed algorithm can be applied in \texttt{R} with the code provided in \url{https://github.com/kwangho-joshua-kim/fair-robust-HTE}.}  

\begin{abstract}
We propose a simple and general framework for nonparametric estimation of heterogeneous treatment effects under fairness constraints.
Under standard regularity conditions, we show that the resulting estimators possess the double robustness property. 
We use this framework to characterize the trade-off between fairness and the maximum welfare achievable by the optimal policy. 
We evaluate the methods in a simulation study and illustrate them in a real-world case study.
\end{abstract}

\section{Introduction}
\label{sec:intro}
In today's data-centric world, an increasing number of decisions that affect people's lives are automatically made by machine learning models. 
Such decision-making systems are implemented in various domains, ranging from finance to healthcare. 
Considering the multifaceted implications of such decisions both at the individual and societal levels, it is crucial to ensure that the underlying models are not only accurate but {fair}. 
In this work, by \emph{fairness} we mean that the model estimates are not biased so that they do not systematically benefit or harm a specific group of people. 
The need to address such algorithmic biases has given rise to a plethora of works on algorithmic fairness (e.g., see  \citet[][]{barocas-hardt-narayanan} for a review).
However, despite the abundance of studies in this general area, comparatively little attention has been given to fairness in causal inference. 
In this work, we propose a novel framework for estimating heterogeneous treatment effects in a fair and robust manner, leveraging recent developments in counterfactual optimization \citep{kim2022cfclassifier, kim2022counterfactualmv}.

\subsection{Related Work}

Understanding of treatment-effect heterogeneity and identifying subgroups that respond in a similar way to a given treatment is of great importance across scientific domains. The most common target estimand to study heterogeneity of treatment effects is the conditional average treatment effect (CATE). Various methods have been proposed to obtain accurate estimates of and valid inferences for the CATE, with a special emphasis in recent years on incorporating flexible machine learning tools \citep[e.g.,][]{imai2013estimating, van2014targeted, athey2016recursive, wager2018estimation, kunzel2017meta, nie2017quasi, kennedy2020optimal}. See \citet{jacob2021cate} for a review.

Although CATE estimation and inference are of independent importance, these tasks are central also to better target interventions, for example by learning subgroup structures or optimal policies. Much of the earlier attempts on these areas involve postulating a parametric model for the CATE to find the subgroups that would benefit from the given treatment \citep[e.g.,][]{murphy2003optimal,  brinkley2010generalized, henderson2010regret, orellana2010dynamic}. Optimal policies or treatment rules are often designed to identify subgroups with CATEs larger than a certain threshold of interest, so that investigators can select the most promising subgroups with certain efficacy or safety \citep[e.g.,][]{zhao2013effectively, schnell2016bayesian, chen2017general, ballarini2018subgroup, wang2022causal}.
More robust approaches to find optimal policies based on the property of doubly robustness have been proposed in subsequent research \citep[e.g.,][]{zhang2012robust, zhang2013robust}. 
In recent studies \citep{kallus2018balanced, kitagawa2018should, athey2021policy}, flexible nonparametric approaches are discussed where an optimal policy is deployed from a pre-specified class that encodes problem-specific constraints (e.g., a budget or a capacity constraint). 

However, this type of data-driven policy-making processes may result in discriminatory treatment of groups defined by sensitive features, such as gender or race. In order to mitigate these algorithmic biases, a wide array of fair estimation criteria have been developed by placing restrictions on the joint distribution of outcome variables and sensitive features \citep[e.g.,][]{hardt2016equality, corbett2017algorithmic, barocas-hardt-narayanan}. 
In some cases, such as risk assessment, counterfactual fairness may be of interest where the fairness criteria depend on potential (or counterfactual) outcomes \citep[e.g.,][]{kusner2017counterfactual, nabi2018fair, coston2020counterfactual, mishler2021fairness, mishler2021fade}.
Recently, constraint-based frameworks have been proposed to flexibly incorporate such fairness criteria in classification \citep[e.g.,][]{zafar2019fairness, mishler2021fade}.
It is also known that fairness-accuracy trade-offs may exist because in some cases the most accurate models under consideration do not satisfy the desired fairness criterion \citep[e.g.,][]{kleinberg2016inherent, menon2018cost, obermeyer2019dissecting, mishler2021fairness}. 

To our knowledge, little work has been done at the intersection of algorithmic fairness and the estimation of heterogeneous treatment effects. Some significant studies have integrated aspects of algorithmic fairness and policy learning \citep{nabi2019learning, viviano2022fair}. However, it is still unclear how to extend existing methods in algorithmic fairness to obtain an efficient and robust estimator for the CATE under general fairness criteria. Further, to the best of our knowledge, the trade-off between fairness and the maximum welfare achievable by the optimal policy has not been formally explored.

\vspace*{-.01in}
\subsection{Contribution}
Our method builds on promising literature at the intersection of algorithmic fairness, causal inference, and stochastic optimization, bridging the gap between algorithmic fairness and the analysis of heterogeneous treatment effects. 
At this intersection, our contribution is twofold. 
First, we propose a simple and general framework for nonparametric estimation of the CATE under general fairness constraints. 
We cast our estimator as a convex optimization problem that can be readily solved with off-the-shelf solvers, and show that the resulting estimators are doubly robust under standard regularity conditions. The proposed estimator can attain fast $\sqrt{n}$ rates with tractable inference even when incorporating flexible machine learning tools. Our proposed approach contributes to the existing works in terms of robustness, flexibility, and ease of implementation. 
Second, we characterize the trade-off between welfare and fairness, by analyzing the regret bounds relative to the optimal policy.
To our knowledge, this is the first to quantify the cost of fairness in policy learning, which helps to understand, for example, how a desired level of fairness requires a social welfare compromise. 

\vspace*{-.06in}
\section{Setup and Framework}
\subsection{Heterogeneous Treatment Effects and Policy Learning}

Consider an i.i.d. sample $(Z_{1}, ... , Z_{n})$ of $n$ tuples $Z=(Y,A,S,X) \sim \Pb$ for some distribution $\Pb$, outcome $Y \in \mathcal{Y}$, binary intervention $A \in \{0,1\}$, sensitive feature $S \in \{0,1\}$, and additional covariates $X \in \mathcal{X}$ for some compact subset $\mathcal{X}$. Here, we assume that larger values of $Y$ denote better outcomes. We let $W=(S,X) \in \mathcal{W}$ represent the measured pre-intervention variables and let $Y^a$ denote the potential outcome that would have been observed (possibly contrary to fact) under treatment or intervention $A=a$. Throughout, we assume the common causal identification assumptions of \emph{consistency}, \emph{no unmeasured confounding}, and \emph{positivity} \citep[e.g.,][Chapter 12]{imbens2015causal}. Under these assumptions, the CATE is defined and identified as
\begin{align*}
    \tau(W) = \E(Y^1 - Y^0 \mid W) 
    = \mu_1(W) - \mu_0(W),
\end{align*}
where $\mu_a(W) = \E[Y \mid W, A=a], \forall a \in \{0,1\}$. A treatment rule or \emph{policy} $g$ is defined by a mapping from the pre-treatment variables to the treatment: i.e., $g:\mathcal{W} \rightarrow \{0,1\}$. The CATE function $\tau$ can be used to produce the optimal policies or to identify subgroups of interest. A policymaker often chooses a policy in such a way that the expected utility or \emph{welfare} defined by
\begin{align} \label{eqn:utility}
    \mathcal{U}(g) = \E\left\{Y^1 g(W) + Y^0\left(1 - g(W) \right) \right\}
\end{align}
is maximized. It is straightforward to show that the {optimal policy} producing the largest value of $\mathcal{U}(g)$ is given by
\begin{align} \label{eqn:optimal-trt-regime}
    g^*(W) = \mathbbm{1}\left\{\tau(W) > 0 \right\}.
\end{align}
$g^*$ targets individuals that would have yielded the larger mean outcome had the treatment assigned\footnote{Here, the strict inequality follows from the convention \citep[see, e.g., ][]{zhang2012robust}.}. More generally, however, one may aim to learn a targeting policy
\begin{align} \label{eqn:subgroup-rule}
    g_{\mathcal{I}}(W) = \mathbbm{1}\left\{\tau(W) \in \mathcal{I} \right\},
\end{align}
where the interval $\mathcal{I}$ defines a subgroup of interest.

\subsection{Motivating Example} \label{sec:motivating-example}
As we illustrate here, ignoring fairness considerations in the CATE estimation problem may introduce serious biases and in turn result in unfair policies. Consider the following data-generating process
\begin{align*}
    & S \sim Bernoulli(0.5), \,\, [X_1, X_2]^\top \mid S \sim \mathcal{N}([0, 2S-1]^\top, I_2), \\
    & \Pb(A=1 \mid W) = expit\left(W^\top [1,0,0] + SX_1 \right), \\
    & \mu_A(W) = AX_2^3/2 + f_\mu(W),
\end{align*}
for some fixed function $f_\mu:\mathcal{W} \rightarrow \R$,
where $expit$ and $I_2$ denote the inverse logit function and the $2 \times 2$ identity matrix. Then, $\tau(W) = X_2^3/2$ and $g^*(W)=\mathbbm{1}(X_2>0)$. When we generate $100$ observations from this model, serious fairness issues arise, as can be seen in Figure \ref{fig:motivating-example}. For example, under $g^*$ only less than $5$\% of individuals with $S=0$ are treated, while more than $90$\% of individuals in the untreated group are $S=0$.

\begin{figure*}[!t]
\centering
\begin{minipage}{.425\linewidth}
  \centering
  \includegraphics[width=\linewidth]{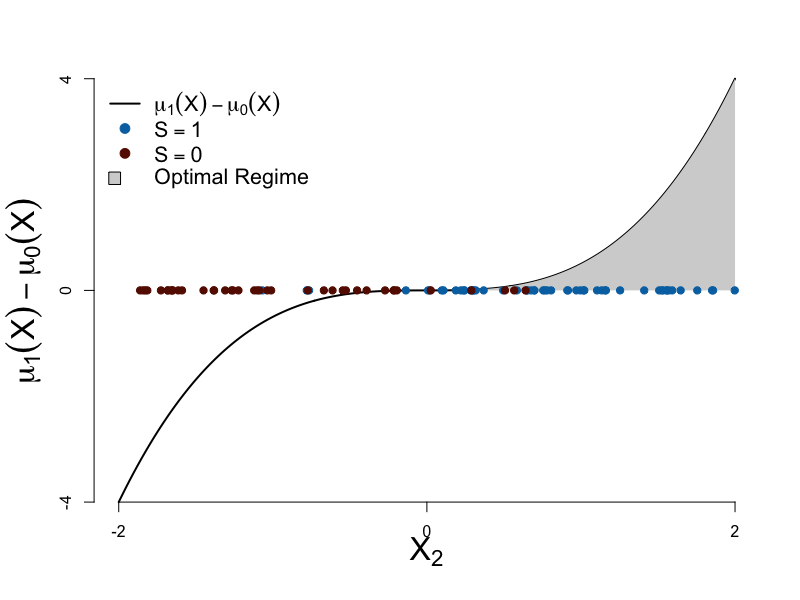}
\end{minipage}
\hfill
\begin{minipage}{.43\linewidth}
  \centering
  \includegraphics[width=\linewidth]{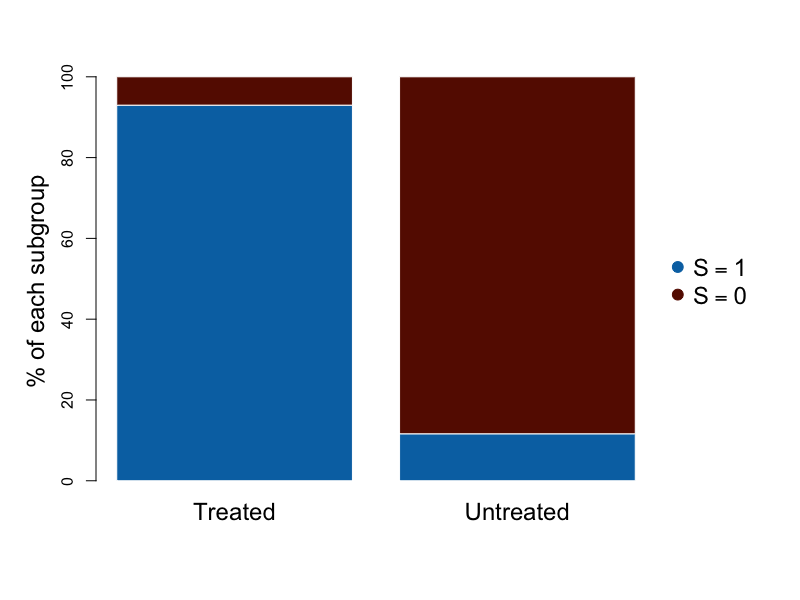}  
\end{minipage}
\caption{When the optimal policy is applied, only less than $7$\% of individuals with $S=0$ are treated while more than $95$\% of individuals in the untreated group are $S=0$.}  \label{fig:motivating-example}
\end{figure*}

In this case, one would be inclined to believe that the sensitive feature $S$ is the source of the heterogeneity. Also, this may lead to policies that discriminate against individuals belonging to the group $S=0$. Hence, accurate estimation of $\tau$ alone may induce socially unacceptable consequences which can cause negative impacts on certain individuals. For example, if $A$ represents the policing practice of stop-and-frisk program, it could be used as a recipe for discriminatory practice of stop-and-frisk toward specific ethnic groups \citep{gelman2007analysis, white2016stop}. If $A$ represents access to medical or health care resources, it could be used as an excuse to create a barrier to healthcare for people that are socially disadvantaged. Many recent studies advocate not only cost-effectiveness but also other ethical values for rationing limited health resources \citep[e.g.,][]{emanuel2020fair, obermeyer2019dissecting}. Unless properly addressed, the biases in estimated heterogeneity treatment effects can perpetuate and amplify unfair societal processes and outcomes. As a result, this can reduce the public’s trust in data-driven systems \citep{schwartz2022towards}.

\subsection{Proposed Framework}

In this section, we lay out a framework for estimating the CATE where we can maximize model fit while securing algorithmic fairness. Specifically, we aim to find a functional approximation of $\tau$, defined by a projection onto a finite-dimensional parametric model subject to fairness constraints. Our target parameter can be reformulated as the solution to the following constrained stochastic optimization problem
\begin{equation}
\label{eqn:true-opt-problem}
\begin{aligned} 
    & \underset{\beta \in \R^k}{\text{minimize}} \quad \mathcal{R}_{\text{MSE}}\left(\beta;\bm{b}\right) \coloneqq \E\left[ \left\{ Y^1 - Y^0 - \beta^\top\bm{b}(W) \right\}^2\right] \\
    & \text{subject to } \quad \left\vert \E\left\{\text{uf}_j(Z)\beta^\top\bm{b}(W)\right\} \right\vert \leq \delta_{j}, \, j \in J,     
\end{aligned}     \tag{$\mathsf{P}$}  
\end{equation}
for some $\delta_{j} \geq 0$ and $J = \{1,...,m\}$. $\delta_{j}$ is a prespecified tolerance for the maximum acceptable level of unfairness in the $j$-th criterion. The solution of the above program corresponds to the coefficients of the best-fitting function of $\tau$ on the finite-dimensional model space spanned by the basis functions $\bm{b}(W)=[b_1(W),...,b_k(W)]^\top$ subject to the $m$ fairness constraints. Note that here we do not assume anything about the true functional relationship between $Y^1-Y^0$ and $W$. This form of linear aggregation is widely used in nonparametric regression \citep[e.g.,][]{juditsky2000functional, tsybakov2003optimal}. \citet{zhao2019covariate} discusses adaptive strategies that allow to consider very rich model family using $\bm{b}(\cdot)$.

We follow \citet{mishler2021fade} and use the \emph{fairness function} $\text{uf}_j: \mathcal{Y}\times\mathcal{W}\times\{0,1\} \rightarrow \R$ to accommodate a broad range of (counterfactual) fairness measures. For example, the criterion of \emph{independence} or \emph{statistical parity}, arguably the most well-known criterion for fairness, can be applied by letting
\begin{align} \label{eqn:g-independence}
    \text{uf}_j(Z) = \frac{1-S}{\E(1-S)} - \frac{S}{\E(S)}.
\end{align}
This leads to \[\left\vert \E\left\{\beta^\top\bm{b}(W) \mid S=0 \right\} - \E\left\{\beta^\top\bm{b}(W) \mid S=1 \right\} \right\vert \leq \delta_{j},\] thus requiring our fitted models to be marginally (approximately) independent of the sensitive feature with respect to the CATE. One could protect the sensitive variable $S$ using \emph{conditional statistical parity} with the fairness function
\begin{align} \label{eqn:g-conditional-independence}
    \text{uf}_j(Z) = \frac{(1-S)\mathbbm{1}\left\{L = l\right\}}{\E\left[(1-S)\mathbbm{1}\left\{L = l\right\}\right]} - \frac{S\mathbbm{1}\left\{L = l\right\}}{\E\left[S\mathbbm{1}\left\{L = l\right\}\right]},
\end{align}
where $L$ is a set of legitimate factors used to specify the conditional statistical parity, and $L$ is some
function of $X$ \citep{corbett2017algorithmic}. This leads to
\begin{align*}
& \big\vert \E\left\{\beta^\top\bm{b}(W) \mid S=0, L = l \right\} \\
& \, - \E\left\{\beta^\top\bm{b}(W) \mid S=1, L = l \right\} \big\vert \leq \delta_{j},
\end{align*}
meaning that controlling for a
limited set of “legitimate” risk factors, our fitted model is conditionally independent of the sensitive feature. When we want to pursue independence for the group that benefits from the treatment, we may consider 
\begin{align} \label{eqn:g-positive-balance}
    \text{uf}_j(Z)
    = \frac{(1-S)\mathbbm{1}\left\{\tau(W)>0\right\} }{\E\left[(1-S)\mathbbm{1}\left\{\tau(W)>0\right\}\right]}
    - \frac{S\mathbbm{1}\left\{\tau(W)>0\right\}}{\E\left[S \mathbbm{1}\left\{\tau(W)>0\right\}\right]}, 
\end{align}
which leads to 
\begin{align*}
& \big\vert \E\left\{\beta^\top\bm{b}(W) \mid S=0, \tau(W)>0 \right\} \\
& \, - \E\left\{\beta^\top\bm{b}(W) \mid S=1, \tau(W)>0 \right\} \big\vert \leq \delta_{j}.
\end{align*}
This corresponds to the condition of \emph{balance for the positive class} \citep{kleinberg2016inherent}, with $\tau(W)$ viewed as a risk score.
One may also employ other fairness criteria with appropriate fairness functions. In case of counterfactual fairness, each $\text{uf}_j$ depends on $Y^1, Y^0$ as well \citep[][Section 3]{mishler2021fade}.

Similar projection approaches to \eqref{eqn:true-opt-problem} have also been used in causal inference \citep[e.g.,][]{neugebauer2007nonparametric, semenova2021debiased, kennedy2021semiparametric}.
There are several reasons why the above projection approach is preferred in our setting. First, as will be seen shortly, the coefficients $\beta$ may be estimated with flexible nonparametric methods while achieving the parametric $\sqrt{n}$ rates and tractable inference due to the property of double robustness. It also aids interpretability: it allows practitioners to better grasp the heterogeneity in treatment effects and audit the resulting policies according to the specified level of unfairness. Finally, the optimal solution of \eqref{eqn:true-opt-problem} can be estimated by solving the optimization problem that approximates \eqref{eqn:true-opt-problem} using various off-the-shelf algorithms. We will discuss this in more detail in the following section.

Another helpful feature of this framework is that we can consider a general setting where only a subset of covariates $V \subseteq W$ can be used for predicting the counterfactual contrast $Y^1-Y^0$. This allows for \textit{runtime confounding}, where some factors used by decision-makers are recorded in the training data (used to construct nuisance estimates) but not available for prediction (see \citet{kim2022cfclassifier} for details; see also \citet{coston2020RuntimeConfounding} and the references therein).

\begin{remark}
    Following the convention in the fairness literature, in \eqref{eqn:true-opt-problem} we only place restrictions on the absolute difference between the first moments (i.e., the conditional means). However, one may consider restricting more comprehensive set of distributional features beyond the means by replacing the constraints with
    \begin{align*}
        \left\vert \E\left\{\text{uf}_j(Z)\text{d}\left(\beta^\top\bm{b}(W)\right) \right\} \right\vert \leq \delta_{j},
    \end{align*}
    where the function $\text{d}:\R \rightarrow \R^p$ determines the distributional features which we want to constrain in each fairness criterion. For instance, one can match up to the $p$-th moment by setting $\text{d}(x)=[x, \ldots, x^p]^\top$. This generalizes the conventional measure of fairness and allows to pursue stricter levels of fairness. However, this entails costs of much harder implementation (i.e., non-convex and non-linear optimization) and stronger regularity conditions for our theoretical results. We leave this extension for future work.
\end{remark}

\textbf{Notation.} 
We briefly introduce some additional notation used in the rest of the paper. For any fixed vector $v$, we let $\Vert v \Vert_q$ denote the $L_q$-norm. Let $\Pn$ denote the empirical measure over $(Z_1,...,Z_n)$. Given a sample operator $h$ (e.g., an estimated function), we let $\Pb$ denote the conditional expectation over a new independent observation $Z$, as in $\Pb(h)=\Pb\{h(Z)\}=\int h(z)d\Pb(z)$ \footnote{When $h$ is a fixed operator, $\Pb$ and $\E$ are used interchangeably.}. Then we use $\Vert h \Vert_{q,\Pb}$ to denote the $L_q(\Pb)$ norm of $h$ defined by $\Vert h \Vert_{q,\Pb} = \left[\int \Vert h(z) \Vert_q^q d\Pb(z)\right]^{\frac{1}{q}}$. Lastly, we let $\lesssim$ denote less than or equal to up to a nonnegative constant.

\section{Estimation and Inference}
\eqref{eqn:true-opt-problem} is not directly solvable so we need to find an \emph{approximating program} of the ``true" program \eqref{eqn:true-opt-problem}. A complication arises since standard approaches to stochastic programming such as \textit{stochastic approximation} (SA) and \textit{sample average approximation} (SAA) \citep[e.g.,][]{nemirovski2009robust, shapiro2014lectures} are infeasible in our setting, because i) the relevant sample moments and stochastic (sub)gradients depend on unobserved counterfactuals, and ii) these approaches cannot incorporate efficient semiparametric estimators with cross-fitting \citep[e.g.,][]{ newey2018cross}. We therefore build our estimators on the recent developments by \citet{kim2022cfclassifier, kim2022counterfactualmv} where counterfactual components are estimated flexibly without any restrictions on our estimand or estimator.

We first describe simple estimators of each counterfactual component. 
Based on the identification assumptions, we obtain the following identity
\begin{align*}
    \E[Y^a\bm{b}(W)] = \E\left[ \mu_a(W) \bm{b}(W) \right] = \E\left[ \frac{AY}{\pi_a(W)} \bm{b}(W) \right],
\end{align*}
where $\pi_a(W)=\Pb[A=a \mid W]$ is the \emph{propensity score}. Then one may estimate the counterfactual parameter $\E[Y^a\bm{b}(W)]$ using the \emph{plug-in} (PI) estimator $\Pn\left\{ \widehat{\mu}_a(W) \bm{b}(W) \right\}$ or the \emph{inverse-probability-weighted} (IPW) estimator $\Pn\left\{ {AY}/{\widehat{\pi}_a(W)} \bm{b}(W) \right\}$ depending on the quality of information to model the observational outcome or treatment process. 
Here $\widehat{\mu}_a$ and $\widehat{\pi}_a$ are some estimators of ${\mu}_a$ and $ {\pi}_a$, respectively. Although widely used in practice, these estimators cannot attain $\sqrt{n}$ rates in general when nonparametric methods are used \citep{kennedy2016semiparametric}. 

We provide more efficient influence-function-based semiparametric estimators for the counterfactual components in \eqref{eqn:true-opt-problem}. Let $\varphi_a$ denote the uncentered efficient influence function (EIF) for the parameter $\E[Y^a]=\E\left\{\E[Y \mid W,A=a ]\right\}$, which is defined by 
\begin{align*}
    & \varphi_a(Z;\eta) = \frac{\mathbbm{1}(A=a)}{\pi_a(W)}\left\{Y- \mu_A(W)\right\} + \mu_a(W),
\end{align*}
with a set of the nuisance components $\eta=\{\pi_a, \mu_a\}$ \citep{kennedy2017semiparametric, kennedy2022semiparametric}.

Following \citet{robins2008higher}, \citet{zheng2010asymptotic}, \citet{Chernozhukov17}, \citet{newey2018cross}, and \citet{kennedy2020optimal}, we propose to use \emph{sample splitting} (or \emph{cross fitting}) to allow for arbitrarily complex nuisance estimators $\widehat{\eta}$. Specifically, we split the data into $K$ disjoint groups, each with size $n/K$ approximately, by drawing variables $(B_1,..., B_n)$ independent of the data, with $B_i=b$ indicating that subject $i$ was split into group $b \in \{1,...,K\}$. Then the semiparametric estimator for $\E[Y^a\bm{b}(W)]$ based on the EIF and sample splitting are given by
\begin{align*} 
    & \frac{1}{K}\sum_{b=1}^K \Pn^b\left\{ \varphi_a(Z;\widehat{\eta}_{-b})\bm{b}(W)\right\} \equiv \Pn\left\{  \varphi_a(Z;\widehat{\eta}_{-B}) \bm{b}(W) \right\},
\end{align*}
where we let $\Pn^b$ denote empirical averages only over the set of units in group $b$ $\{i : B_i=b\}$ and let $\widehat{\eta}_{-b}$ denote the nuisance estimator constructed only using those units $\{i : B_i \neq b\}$. This suggests
\begin{align} \label{eqn:estimator-objective}
    \Pn\left\{  \varphi_1(Z;\widehat{\eta}_{-B}) - \varphi_0(Z;\widehat{\eta}_{-B}) \bm{b}(W) \right\}
\end{align}
as our estimator for $\E[(Y^1-Y^0)\bm{b}(W)] = \E[\tau(W)\bm{b}(W)]$.
Under weak regularity conditions, this sample-splitting-based semiparametric estimator attains the efficiency bound with the double robustness property, and thus allow us to employ flexible machine learning methods while achieving the $\sqrt{n}$-rate of convergence and valid inference \citep{kennedy2017semiparametric}\footnote{If one is willing to rely on appropriate empirical process conditions (e.g., Donsker-type or low entropy conditions \citep{van2000asymptotic}), then $\eta$ can be estimated on the same sample without sample splitting. However this would limit the flexibility of the nuisance estimators.}. 

Development of an efficient estimator for $\E\left\{\text{uf}_j(Z)\bm{b}(W) \right\}$ depends on the form of the fairness function $\text{uf}_j$. Here, we provide a few illustrative cases. First, for the fairness function \eqref{eqn:g-independence} corresponding to the criterion of independence, one may simply use the following sample-average type estimator:
\begin{align*}
    \Pn\left\{\widehat{\text{uf}}_j\bm{b}(W) \right\} = \Pn\left[\left\{\frac{(1-S)}{\Pn(1-S)} - \frac{S}{\Pn(S)}\right\}\bm{b}(W)\right],
\end{align*}
which is naturally $\sqrt{n}$-consistent without any need for nuisance estimation.

Next, for the fairness functions involving a non-smooth component as in \eqref{eqn:g-positive-balance}, we can employ some standard techniques in the nonparametric literature. One of them is the margin condition. In case of \eqref{eqn:g-positive-balance}, we can restrict the probability that the two outcome regression functions get too close to each other, by imposing the following margin condition:
\begin{definition}[Margin Condition] \label{assumption:MC}
For any margin exponent $\alpha > 0$ and for all $t \geq 0$,
\begin{align} \label{eqn:margin-condition}
    \Pb(\vert \mu_1(W) - \mu_0(W)  \vert \leq t) \lesssim t^{\alpha}.    
\end{align}
\end{definition}
The above margin condition is analogous to that used in \citet{kitagawa2018should, luedtke2016optimal, luedtke2016statistical, kennedy2019survivor, kennedy2020sharp} as well as other problems involving estimation of non-smooth functionals such as classification \citep{audibert2007fast}, couneterfactual density estimation \citep{kim2018causal}, and clustering \citep{levrard2018quantization}. This margin condition allows us to use the following plug-in type estimator with sample splitting:
\begin{align*}
    & \Pn\left\{\widehat{\text{uf}}_j\bm{b}(W) \right\} = \frac{1}{K}\sum_{b=1}^K\Pn^b\Bigg(\bm{b}(W)\times \\
    &\quad \Bigg[\frac{(1-S)\mathbbm{1}\left\{\widehat{\mu}_{1,-b}(W) - \widehat{\mu}_{0,-b}(W)>0\right\} }{\frac{1}{K}\sum_{b=1}^K\Pn^b\left[(1-S)\mathbbm{1}\left\{\widehat{\mu}_{1,-b}(W) - \widehat{\mu}_{0,-b}(W)>0\right\}\right]} \notag \\
    & \quad - \frac{S\mathbbm{1}\left\{\widehat{\mu}_{1,-b}(W) - \widehat{\mu}_{0,-b}(W)>0\right\}}{\frac{1}{K}\sum_{b=1}^K\Pn^b\left[S \mathbbm{1}\left\{\widehat{\mu}_{1,-b}(W) - \widehat{\mu}_{0,-b}(W)>0\right\}\right]}\Bigg]\Bigg).
\end{align*}
The above estimator attains $\sqrt{n}$-consistency and asymptotic normality if $\Pb\left[ \mathbbm{1}\{\widehat{\mu}_1-\widehat{\mu}_0 > 0\} \neq \mathbbm{1}\{\mu_{1}-\mu_{0} > 0\}\right]=o_\Pb(1)$ and $\max_a \Vert \widehat{\mu}_{a}- \mu_{a} \Vert_{\infty, \Pb}^{\alpha}=o_\Pb(n^{-\frac{1}{2}})$ (more details can be found in Appendix \ref{app:margin-condition}). Alternatively, one may use  smoothing approximation techniques such as kernel or polynomial smoothing \citep[e.g.,][]{kim2018causal, kennedy2020optimal}. We do not discuss each of them here as it is beyond the scope of this paper. 

Further, if our fairness function involves the potential outcomes, we can utilize estimators similar to those employed for our objective function. Specifically, if the fairness function $\text{uf}_j(Y^0,Y^1,W)$ is a smooth function of $Y^1, Y^0$, then one may use the following EIF-based semiparametric estimator
\begin{align*}     
    & \Pn\left\{\widehat{\text{uf}}_j\bm{b}(W) \right\} \\
    & = \Pn\left\{\text{uf}_j\left(\varphi_0(Z;\widehat{\eta}_{-B}),\varphi_1(Z;\widehat{\eta}_{-B}),W\right)\bm{b}(W) \right\},
\end{align*}
which is asymptotically normal and efficient according to the same logic as used for \eqref{eqn:estimator-objective}.

Consequently, our approximating program can be found as the following convex quadratic program (QP):
\begin{equation}
\label{eqn:approx-opt-problem}
\begin{aligned}
    & \underset{\beta \in \R^k}{\text{minimize}} \quad  \frac{1}{2}\beta^\top\Pn\left\{\bm{b}(W)\bm{b}(W)^\top\right\} \beta \\
    & \phantom{\underset{\beta \in \R^k}{\text{minimize}}} \quad - \beta^\top\Pn\left[\left\{ \varphi_1(Z;\widehat{\eta}_{-B}) - \varphi_0(Z;\widehat{\eta}_{-B})\right\} \bm{b}(W) \right] \\
    & \text{subject to } \quad \left\vert \beta^\top\Pn\left\{\widehat{\text{uf}}_j\bm{b}(W) \right\} \right\vert \leq \delta_{j}, \, j \in J.
\end{aligned}     \tag{$\widehat{\mathsf{P}}$}  
\end{equation}
The above QP can be readily solved using off-the-shelf solvers. Let $\widehat{\beta}$ be an optimal solution to \eqref{eqn:approx-opt-problem}. Our proposed estimator for $\tau$ is then given by $\widehat{\tau}(W) = \widehat{\beta}^\top\bm{b}(W)$.


Next, we introduce the following assumptions on our basis expansion and nuisance estimation. 

\begin{enumerate}[label={}, leftmargin=*]
    \item \customlabel{assumption:A1}{(A1)} $\E[\bm{b}(W)\bm{b}(W)^\top]$ is positive definite 
    \item \customlabel{assumption:A2}{(A2)}  $\Pb(\widehat{\pi}_a \in [\epsilon, 1-\epsilon]) = 1$ for some $\epsilon > 0$ 
    \item \customlabel{assumption:A3}{(A3)} $\Vert \widehat{\mu}_{a}- \mu_{a} \Vert_{2,\Pb} = o_\Pb(1)$ or $\Vert \widehat{\pi}_{a}- \pi_{a} \Vert_{2,\Pb} = o_\Pb(1)$
    \item \customlabel{assumption:A4}{(A4)}  
    $   \begin{aligned}[t]
            &\Vert \widehat{\pi}_{a} - {\pi}_{a} \Vert_{2,\Pb} \Vert \widehat{\mu}_{a} - \mu_{a} \Vert_{2,\Pb} = o_\Pb(n^{-\frac{1}{2}})
            \end{aligned} 
        $
\end{enumerate}	  
Assumption \ref{assumption:A1}, requiring that our basis functions are never perfectly collinear, ensures that the quadratic growth condition holds at the optimal solution in \eqref{eqn:true-opt-problem}. This could be replaced with a weaker technical condition \citep[see][or Appendix \ref{appsec:definitions-LICQ-SC}]{kim2022counterfactualmv}. It also guarantees uniqueness of the optimal solution to \eqref{eqn:true-opt-problem}. Assumptions \ref{assumption:A2} - \ref{assumption:A4} are conditions regarding our nuisance estimation, and commonly used in semiparametric estimation in the causal inference literature \citep[e.g.,][]{kennedy2016semiparametric, kennedy2022semiparametric}. We also need the following assumptions to ensure that the stochastic components in the constraints are consistent at fast rates and attain asymptotic normality.
\begin{enumerate}[label={}, leftmargin=*]
    \item \customlabel{assumption:A5}{(A5)} $\Pn\left\{\widehat{\text{uf}}_j\bm{b}(W) \right\} - \E\left\{\text{uf}_j\bm{b}(W)\right\}=O_\Pb\left(n^{-\frac{1}{2}}\right)$ 
    \item \customlabel{assumption:A6}{(A6)} $\sqrt{n}\left[\Pn\left\{\widehat{\text{uf}}_j\bm{b}(W) \right\} - \E\left\{\text{uf}_j\bm{b}(W)\right\}\right]$ converges in distribution to a normal random variable with zero mean and finite variance.
\end{enumerate}	 

In the following theorem, we provide the large-sample properties of our proposed estimator.

\begin{theorem} \label{thm:asymptotics}
Let $\beta^*$ and $\widehat{\beta}$ denote the optimal solutions to \eqref{eqn:true-opt-problem} and \eqref{eqn:approx-opt-problem}, respectively. If Assumptions \ref{assumption:A1} - \ref{assumption:A3}, and \ref{assumption:A5} hold, then
\begin{align*}
    \Vert \widehat{\beta} - \beta^* \Vert_2 = O_\Pb\left(\max_a \Vert \widehat{\pi}_a -\pi_a \Vert_{2,\Pb} \Vert \widehat{\mu}_a -\mu_a \Vert_{2,\Pb} + n^{-\frac{1}{2}}\right).
\end{align*}
If we additionally assume \ref{assumption:A4} and \ref{assumption:A6}, and that the Linear Independence Constraint Qualification (LICQ) and Strict Complementarity (SC) hold at $\beta^*$, then $\sqrt{n}(\widehat{\beta} - \beta^*)$ converges in distribution to a normal random variable with zero mean and finite variance. 
\end{theorem}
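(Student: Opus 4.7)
The plan is to cast both $(\mathsf{P})$ and $(\widehat{\mathsf{P}})$ as strongly convex linearly-constrained quadratic programs and apply perturbation analysis. Let $Q := \E[\bm{b}(W)\bm{b}(W)^\top]$, $q := \E[(Y^1-Y^0)\bm{b}(W)]$, and $c_j := \E[\text{uf}_j(Z)\bm{b}(W)]$, with $\widehat{Q}, \widehat{q}, \widehat{c}_j$ their sample analogues appearing in \eqref{eqn:approx-opt-problem}; each absolute-value constraint splits into two linear inequalities $\pm \beta^\top c_j \leq \delta_j$. Assumption \ref{assumption:A1} makes both objectives strongly convex with positive quadratic growth around their minimizers and guarantees uniqueness of $\beta^*$. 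My first step is to bound the three perturbations $\widehat{Q} - Q$, $\widehat{q} - q$, $\widehat{c}_j - c_j$: a CLT for sample Gram matrices yields $\|\widehat{Q} - Q\|_{\text{op}} = O_\Pb(n^{-1/2})$; \ref{assumption:A5} directly gives $\|\widehat{c}_j - c_j\|_2 = O_\Pb(n^{-1/2})$; and for $\widehat{q} - q$ the standard cross-fit doubly-robust decomposition, writing $\Delta_a := \varphi_a(Z;\widehat{\eta}_{-B}) - \varphi_a(Z;\eta)$, reads
\begin{align*}
& \Pn\{\varphi_a(Z;\widehat{\eta}_{-B})\bm{b}(W)\} - \E[Y^a\bm{b}(W)] \\
& \qquad = (\Pn - \Pb)\{\varphi_a(Z;\eta)\bm{b}(W)\} + (\Pn - \Pb)\{\Delta_a \bm{b}(W)\} + \Pb\{\Delta_a \bm{b}(W)\},
\end{align*}
whose first term is $O_\Pb(n^{-1/2})$ by the CLT, whose second is $o_\Pb(n^{-1/2})$ by sample splitting together with \ref{assumption:A3}, and whose third reduces, via \ref{assumption:A2} and Cauchy--Schwarz, to $O_\Pb(\|\widehat{\pi}_a - \pi_a\|_{2,\Pb}\|\widehat{\mu}_a - \mu_a\|_{2,\Pb})$.

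The consistency claim then follows from a Lipschitz-stability bound for strongly convex QPs: because $\widehat{\beta}$ is feasible for $(\widehat{\mathsf{P}})$ and $\beta^*$ for $(\mathsf{P})$, the solution map is Lipschitz in the problem data $(Q, q, \{c_j\})$ whenever \ref{assumption:A1} holds --- see the analogous construction in \citet{kim2022counterfactualmv} --- yielding
\[
\|\widehat{\beta} - \beta^*\|_2 \lesssim \|\widehat{Q} - Q\|_{\text{op}} + \|\widehat{q} - q\|_2 + \sum_{j=1}^m \|\widehat{c}_j - c_j\|_2,
\]
which combined with the above rates gives the first assertion. For asymptotic normality, I would work with the KKT system at $\beta^*$. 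Under LICQ and SC, Robinson-type stability implies that with probability tending to one the active sets of $(\mathsf{P})$ and $(\widehat{\mathsf{P}})$ coincide, and on that event $(\widehat{\beta}, \widehat{\lambda}_{\text{act}})$ and $(\beta^*, \lambda^*_{\text{act}})$ both solve a nonsingular linear system in the active constraints. The implicit function theorem then expresses $\widehat{\beta} - \beta^*$ as an affine smooth functional of $(\widehat{Q} - Q, \widehat{q} - q, \{\widehat{c}_j - c_j\})$ plus $o_\Pb(n^{-1/2})$ remainders; upgrading \ref{assumption:A3} to \ref{assumption:A4} collapses the doubly-robust bias in $\widehat{q} - q$ to $o_\Pb(n^{-1/2})$, so jointly with \ref{assumption:A6} the three perturbations are $\sqrt{n}$-asymptotically normal, and the delta method yields $\sqrt{n}(\widehat{\beta} - \beta^*) \xrightarrow{d} \mathcal{N}(0, V)$ for a finite covariance $V$.

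The main obstacle is the active-set stability step used in the asymptotic-normality half: one must verify that the active sets of $(\mathsf{P})$ and $(\widehat{\mathsf{P}})$ coincide with probability approaching one and that the restricted KKT Jacobian is uniformly well-conditioned. This is precisely where SC and LICQ are essential --- SC keeps the true Lagrange multipliers bounded away from zero so that active constraints do not ``flicker'' under small data perturbations, and LICQ guarantees invertibility of the constraint Jacobian on the active set so the implicit KKT map is well-posed. Once this stability is secured, the remaining CLT, doubly-robust-expansion, and delta-method pieces are essentially routine and parallel the infrastructure already developed in \citet{kim2022counterfactualmv}.
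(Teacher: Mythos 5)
Your proposal is correct and follows essentially the same route as the paper: the paper likewise recasts \eqref{eqn:true-opt-problem} and \eqref{eqn:approx-opt-problem} as linearly constrained programs parametrized by the moment data, bounds the data perturbations via the cross-fit doubly robust decomposition together with \ref{assumption:A5} (delegated to Lemma A.1 of \citet{kim2022cfclassifier}), and then invokes Theorems 3.1 and 3.2 of \citet{kim2022counterfactualmv} for the Lipschitz-stability and LICQ/SC active-set arguments whose internals you sketch explicitly.
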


The above result immediately follows by Theorems 3.1 and 3.2 of \citet{kim2022counterfactualmv}, and gives conditions under which $\widehat{\beta}$ is $\sqrt{n}$-consistent and asymptotically normal. Thus, valid confidence intervals and hypothesis tests can be constructed via the bootstrap. The doubly robust second-order term $\Vert \widehat{\pi}_a -\pi_a \Vert_{2,\Pb} \Vert \widehat{\mu}_a -\mu_a \Vert_{2,\Pb}$ will be small if either $\pi_a$ or $\mu_a$ are estimated accurately. In nonparametric modeling, this second-order error substantially lowers the bar for the nuisance estimator convergence rate, which allows much more flexible methods to be employed  while still achieving $\sqrt{n}$ rates; for example, it suffices that both nuisance functions are estimated consistently at $n^{\frac{1}{4}}$ rates \citep{kennedy2016semiparametric}. LICQ and SC are regularity conditions commonly found in the optimization literature \citep[e.g.,][]{still2018lectures, shapiro2014lectures}; see Appendix \ref{appsec:definitions-LICQ-SC} for the formal definitions. Here we use weaker assumptions than the standard results in stochastic optimization \citep{shapiro1993asymptotic} due to the linear constraints \citep{kim2022counterfactualmv}. We also remark that, while we choose the mean squared error $\mathcal{R}_{\text{MSE}}$ as our default risk function, other risk functions, even those with regularization terms, may be used in our framework if they satisfy the necessary regularity conditions \citep[see][Section 3]{kim2022counterfactualmv}.


\section{Regret Bounds and Fairness-Welfare Tradeoff} \label{sec:regret-and-tradeoff}

In this section, we analyze the regret upper bounds and discuss the implications of using the proposed CATE estimator under fairness criteria for the estimation of the optimal policy $g^*$ defined in \eqref{eqn:optimal-trt-regime}. One may estimate $g^*$ by
\begin{align} \label{eqn:policy-estimator}
    \widehat{g}(W) = \mathbbm{1}\left\{\widehat{\beta}^\top\bm{b}(W) > 0\right\},
\end{align}
where $\widehat{\beta}^\top\bm{b}$ is our estimates of $\tau$ obtained by solving \eqref{eqn:approx-opt-problem}. Note that since $\widehat{g}$ is a non-smooth function of $\widehat{\beta}^\top\bm{b}$, one should not in general expect to achieve the same level of fairness for $\widehat{g}$ as for $\widehat{\beta}^\top\bm{b}$ (see Appendix \ref{appsec:nonlinearity-example} for more details). Following the convention in the literature, we evaluate the performance of the above estimated policy in terms of the welfare loss or \emph{regret} relative to the maximum achievable welfare $\mathcal{U}(g^*)$, i.e., $\mathcal{U}(g^*) - \mathcal{U}(\widehat{g})$. 

We shall use the margin condition \eqref{assumption:MC} for our analysis, which helps us to avoid expensive classification errors for $\widehat{g}$ when $\tau$ is close to $0$. In the following lemma, we adapt the comparison results in \citet{audibert2007fast} and provide two useful inequalities between the regrets and the $L_q$ risks of the proposed CATE estimator.

\begin{lemma} \label{lem:comparison-inequalities}
    Assume that the margin condition \eqref{eqn:margin-condition} holds with margin exponent $0 < \alpha <\infty$.
    Define the risk score  $\Delta \equiv \Delta(W) = \tau(W)$ and $\widehat{\Delta} \equiv \widehat{\Delta}(W) = \widehat{\beta}^\top\bm{b}(W)$. 
    Then 
    \begin{align*}
        \mathcal{U}(g^*) - \mathcal{U}(\widehat{g}) \lesssim  \left\Vert   \widehat{\Delta} - \Delta \right\Vert_{\infty,\Pb}^{\alpha+1}.
    \end{align*}
    Further, for any $1 \leq q < \infty$, 
    \begin{align*}
        \mathcal{U}(g^*) - \mathcal{U}(\widehat{g}) \lesssim  \left\Vert   \widehat{\Delta} - \Delta \right\Vert_{q,\Pb}^{\frac{q(1+\alpha)}{q+\alpha}}.
    \end{align*}
\end{lemma}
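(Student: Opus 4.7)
The plan is to adapt the Audibert–Tsybakov comparison technique to the causal-inference setting. First I would rewrite the welfare regret as a weighted misclassification error: using \eqref{eqn:utility} along with $g^*(W) = \mathbbm{1}\{\tau(W) > 0\}$ and $\widehat g(W) = \mathbbm{1}\{\widehat\Delta(W) > 0\}$, one checks that $\tau(g^* - \widehat g) = |\tau|\,\mathbbm{1}\{\widehat g \neq g^*\}$ pointwise, so that
\[
\mathcal{U}(g^*) - \mathcal{U}(\widehat g) = \E\bigl[\tau(W)(g^*(W) - \widehat g(W))\bigr] = \E\bigl[|\Delta|\,\mathbbm{1}\{\widehat g \neq g^*\}\bigr].
\]
The workhorse observation I would establish next is the pointwise inequality $|\Delta| \leq |\widehat\Delta - \Delta|$ on the event $\{\widehat g \neq g^*\}$, because a disagreement in the signs of $\widehat\Delta$ and $\Delta$ forces them to lie on opposite sides of zero.

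For the $L_\infty$ bound, set $\delta := \|\widehat\Delta - \Delta\|_{\infty,\Pb}$. Since $|\widehat\Delta - \Delta| \leq \delta$ almost surely, the workhorse inequality forces $|\Delta| \leq \delta$ on $\{\widehat g \neq g^*\}$, and the margin condition \eqref{eqn:margin-condition} then yields
\[
\mathcal{U}(g^*) - \mathcal{U}(\widehat g) \leq \E[|\Delta|\,\mathbbm{1}\{|\Delta| \leq \delta\}] \leq \delta\,\Pb(|\Delta| \leq \delta) \lesssim \delta^{1+\alpha}.
\]

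For the $L_q$ bound, I would split at a free threshold $t>0$ into the parts $\{|\Delta|\leq t\}$ and $\{|\Delta|>t\}$. The first piece is bounded by $t\,\Pb(|\Delta|\leq t)\lesssim t^{1+\alpha}$ via the margin condition. On the second piece the workhorse inequality combined with $|\Delta|>t$ gives $|\widehat\Delta-\Delta|>t$; for $q>1$, inserting the factor $\bigl(|\widehat\Delta-\Delta|/t\bigr)^{q-1}\geq 1$ yields the Markov-type bound $t^{1-q}\|\widehat\Delta-\Delta\|_{q,\Pb}^{q}$. Balancing the two pieces by optimizing $t^{1+\alpha} + t^{1-q}\|\widehat\Delta-\Delta\|_{q,\Pb}^{q}$ in $t$ produces $t \asymp \|\widehat\Delta-\Delta\|_{q,\Pb}^{q/(q+\alpha)}$ and the advertised exponent $q(1+\alpha)/(q+\alpha)$. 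The endpoint $q=1$ bypasses the optimization because the workhorse inequality alone gives $\mathcal{U}(g^*) - \mathcal{U}(\widehat g) \leq \|\widehat\Delta-\Delta\|_{1,\Pb}$, which matches the formula since $q(1+\alpha)/(q+\alpha) = 1$ in that case. I do not foresee a serious obstacle; the only delicate steps are the bookkeeping around $\Delta=0$ (absorbed by the margin event) and the optimization in $t$, both of which are standard.
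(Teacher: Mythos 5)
Your proposal is correct and follows essentially the same Audibert--Tsybakov comparison argument as the paper: both reduce the regret to $\E\left[|\Delta|\,\mathbbm{1}\{|\Delta|\le|\widehat{\Delta}-\Delta|\}\right]$ (the paper cites Lemma 1 of Kennedy et al.\ 2020 for your ``workhorse'' inequality), then use the sup-norm plus the margin condition for the first bound and a threshold split optimized over $t$ for the second. The only cosmetic difference is that you bound the small-$|\Delta|$ piece directly by $t\,\Pb(|\Delta|\le t)\lesssim t^{1+\alpha}$, whereas the paper applies H\"older to get $\Vert\widehat{\Delta}-\Delta\Vert_{q,\Pb}\,\Pb(|\Delta|\le t)^{(q-1)/q}$; both balance at $t\asymp\Vert\widehat{\Delta}-\Delta\Vert_{q,\Pb}^{q/(q+\alpha)}$ to yield the same exponent $q(1+\alpha)/(q+\alpha)$.
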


Using this lemma, the following theorem provides upper bounds of the regret for our proposed estimator $\widehat{g}$ in \eqref{eqn:policy-estimator}. 
These results are asymptotic in the sample size $n$.

\begin{theorem} \label{thm:regret-bound}
Assume \ref{assumption:A1} - \ref{assumption:A3}, \ref{assumption:A5} and that the margin condition \eqref{eqn:margin-condition} holds with margin exponent $0 < \alpha <\infty$. 
Let
\begin{equation} \label{eqn:true-unconstrained}
    \widetilde{\beta} =  \underset{\beta \in \R^k}{\argmin} \quad \E\left\{ \left( Y^1-Y^0 - \beta^\top\bm{b}(W) \right)^2\right\},
\end{equation}
and define the remainder terms 
\begin{align*}
     & R_{1,n} = O_\Pb\left(\max_a \Vert \widehat{\pi}_a -\pi_a \Vert_{2,\Pb} \Vert \widehat{\mu}_a -\mu_a \Vert_{2,\Pb} + n^{-\frac{1}{2}}\right), \\
     & R_{2}= O\Big(\Big\Vert \sum_j \sqrt{\lambda_j} \text{uf}_j(Z)\Big\Vert_{2,\Pb} \Big\Vert \bm{b}(W) \Big\Vert_{2,\Pb} \Big),
\end{align*} 
where $\lambda_j \geq 0$ is the Lagrange multiplier associated with the $j$-th fairness constraint in \eqref{eqn:true-opt-problem}.
Then 
\begin{enumerate}
    \item[(i)] 
    $   \begin{aligned}[t]
            \mathcal{U}(g^*) - \mathcal{U}(\widehat{g}) \lesssim & \left\Vert \tau(W) - {\widetilde{\beta}}^\top\bm{b}(W) \right\Vert_{\infty, \Pb}^{1+\alpha} \\
            & + R_{1,n}^{1+\alpha} + R_{2}^{1+\alpha}, 
            \end{aligned} 
    $
    \item[(ii)]
    $   \begin{aligned}[t]
            \Pb\left\{\widehat{g}(W) \neq {g^*}(W) \right\} \lesssim & \left\Vert \tau(W) - {\widetilde{\beta}}^\top\bm{b}(W) \right\Vert_{\infty, \Pb}^{\alpha} \\
            & + R_{1,n}^{\alpha} + R_{2}^{\alpha}, 
            \end{aligned} 
    $
    \item[(iii)]
    $   \begin{aligned}[t]
            \mathcal{U}(g^*) - \mathcal{U}(\widehat{g}) \lesssim & \left\Vert \tau(W) - {\widetilde{\beta}}^\top\bm{b}(W) \right\Vert_{q,\Pb}^{\frac{q(1+\alpha)}{q+\alpha}} \\
            & + R_{1,n}^{\frac{q(1+\alpha)}{q+\alpha}} + R_{2}^{\frac{q(1+\alpha)}{q+\alpha}}, \forall  1 \leq q < \infty.
            \end{aligned} 
    $
\end{enumerate}

\end{theorem}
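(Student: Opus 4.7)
The plan is to prove all three inequalities from a single master decomposition. For (i) and (iii) I would first apply Lemma \ref{lem:comparison-inequalities} with $\Delta=\tau(W)$ and $\widehat{\Delta}=\widehat{\beta}^\top \bm{b}(W)$ to convert the welfare regret into the $L_\infty$ (resp.\ $L_q$) estimation error of the CATE estimator raised to the power $1+\alpha$ (resp.\ $q(1+\alpha)/(q+\alpha)$). Part (ii) is the analogous consequence of a fast-rate Audibert--Tsybakov-style comparison (classification-error version): under the margin condition \eqref{eqn:margin-condition}, $\Pb\{\widehat{g}\neq g^*\}\lesssim \|\widehat{\Delta}-\Delta\|_{\infty,\Pb}^{\alpha}$, so the same decomposition works with exponent $\alpha$ instead of $1+\alpha$.

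The master decomposition introduces both population anchors $\beta^*$ (constrained) and $\widetilde{\beta}$ (unconstrained):
\begin{equation*}
\widehat{\beta}^\top \bm{b}(W)-\tau(W)
=\underbrace{(\widehat{\beta}-\beta^*)^\top\bm{b}(W)}_{\text{(a) estimation}}
+\underbrace{(\beta^*-\widetilde{\beta})^\top\bm{b}(W)}_{\text{(b) fairness cost}}
+\underbrace{\widetilde{\beta}^\top\bm{b}(W)-\tau(W)}_{\text{(c) approximation}}.
\end{equation*}
Term (a) is handled by Theorem \ref{thm:asymptotics}, which under \ref{assumption:A1}--\ref{assumption:A3},\ref{assumption:A5} gives $\|\widehat{\beta}-\beta^*\|_2=R_{1,n}$; combined with boundedness of $\bm{b}$ on the compact $\mathcal{X}$, this yields the $R_{1,n}^{1+\alpha}$ (and $R_{1,n}^{\alpha}$, and $L_q$) contributions. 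Term (c) is the $L_\infty$/$L_q$ approximation error $\|\tau-\widetilde{\beta}^\top\bm{b}\|$ and appears directly in each bound. Once (a), (b), (c) are each controlled, I would apply the triangle inequality and the elementary bound $(x+y+z)^p \lesssim x^p+y^p+z^p$ with $p=1+\alpha$, $\alpha$, or $q(1+\alpha)/(q+\alpha)$ as appropriate.

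The work concentrates on term (b), the cost of imposing the fairness constraints. I would proceed by rewriting each absolute-value constraint in \eqref{eqn:true-opt-problem} as two one-sided inequalities with nonnegative multipliers $\lambda_j^\pm$, then subtracting the stationarity conditions for $\beta^*$ and $\widetilde{\beta}$. Since $M\coloneqq \E[\bm{b}(W)\bm{b}(W)^\top]$ is invertible by \ref{assumption:A1}, one obtains the closed form
\begin{equation*}
\beta^*-\widetilde{\beta}
= -\tfrac{1}{2}\, M^{-1}\sum_{j}(\lambda_j^+-\lambda_j^-)\,\E[\mathrm{uf}_j(Z)\bm{b}(W)].
\end{equation*}
Recognizing $M^{-1}\E[\mathrm{uf}_j(Z)\bm{b}(W)]$ as the coefficient vector of the $L^2(\Pb)$-projection of $\mathrm{uf}_j(Z)$ onto $\mathrm{span}(\bm{b})$, I would use the projection inequality $\|\Pi h\|_{2,\Pb}\le \|h\|_{2,\Pb}$ together with Cauchy--Schwarz over $w$ to pass from this representation to a bound of the form $\|\Pi(\sum_j c_j\,\mathrm{uf}_j)\|_{2,\Pb}\,\|\bm{b}(W)\|_{2,\Pb}$ for an appropriate reweighting of the multipliers, arriving at the stated $R_2$.

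The main obstacle is precisely this KKT/projection manipulation that produces the advertised form of $R_2$ (in particular the $\sqrt{\lambda_j}$ weighting inside the norm, which I anticipate arising from an AM--GM splitting between the size of the dual variables and the size of the constraint functionals, controlled using complementary slackness $\lambda_j^\pm\bigl(|\E[\mathrm{uf}_j\beta^{*\top}\bm{b}]|-\delta_j\bigr)=0$). Everything else --- the decomposition, the invocation of Theorem \ref{thm:asymptotics} for (a), and the routing through Lemma \ref{lem:comparison-inequalities} for (i)/(iii) or the Audibert--Tsybakov comparison for (ii) --- is routine once the fairness-cost term is pinned down. The final assembly is linear in effort across the three parts, with only the exponent changing.
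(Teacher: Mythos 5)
Your proposal reproduces the paper's architecture almost exactly: Lemma \ref{lem:comparison-inequalities} (and the analogous margin-condition bound on $\Pb\{\widehat{g}(W)\neq g^*(W)\}$ for part (ii)) converts each claim into a bound on $\Vert\tau-\widehat{\beta}^\top\bm{b}\Vert$; the triangle inequality splits this into the approximation error at $\widetilde{\beta}$, the fairness cost $\Vert\beta^*-\widetilde{\beta}\Vert_2$, and the estimation error $\Vert\widehat{\beta}-\beta^*\Vert_2$; Theorem \ref{thm:asymptotics} handles the last piece; and $(x+y+z)^p\lesssim x^p+y^p+z^p$ assembles the exponents. All of that matches the paper.

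The one substantive divergence is the fairness-cost term, and it is exactly the step you yourself flag as unresolved. Your stationarity computation is correct as far as it goes: subtracting the KKT conditions of the constrained and unconstrained problems gives $\beta^*-\widetilde{\beta}=-\tfrac{1}{2}M^{-1}\sum_j(\lambda_j^+-\lambda_j^-)\E[\mathrm{uf}_j(Z)\bm{b}(W)]$, which is \emph{linear} in the multipliers of the one-sided linear constraints. No AM--GM or complementary-slackness manipulation will turn those linear multipliers into the $\sqrt{\lambda_j}$ of the stated $R_2$, because the $\lambda_j$ in the theorem come from a different parametrization: the paper forms the Lagrangian with the \emph{squared} constraint functionals, $\sum_j\lambda_j\left[\beta^\top\E\{\mathrm{uf}_j(Z)\bm{b}(W)\}\right]^2=\sum_j(\beta^\top\xi_j)^2$ with $\xi_j=\sqrt{\lambda_j}\,\E\{\mathrm{uf}_j(Z)\bm{b}(W)\}$, embeds both \eqref{eqn:true-unconstrained} (the case $\xi=0$) and the Lagrangian form of \eqref{eqn:true-opt-problem} into a parametrized family $\mathsf{P}(\xi)$, and then invokes the quadratic growth condition implied by \ref{assumption:A1} to obtain the perturbation bound $\Vert\beta(0)-\beta(\xi)\Vert_2\lesssim\Vert\xi\Vert_2$; Cauchy--Schwarz and Jensen then yield $R_2$. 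That is where the square root comes from, and it is the one genuinely non-routine idea in the proof. Your route could in principle be completed --- the two multiplier conventions are related (at an active constraint the one-sided multiplier equals $2\delta_j\lambda_j$), and bounded multipliers can be absorbed into constants --- but as written the mechanism producing $\sqrt{\lambda_j}$ is absent, so this step should be counted as a gap rather than a finished alternative argument. Everything else in your plan is sound and coincides with the paper's proof.
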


In (ii), $\Pr\left\{\widehat{g}(W) \neq {g^*}(W) \right\}$ is the probability that $\widehat{g}$ differs from the true optimal policy $g^*$ over a new observation. Theorem \ref{thm:regret-bound} shows that the regret bounds depend on the levels of both the nuisance estimation accuracy and fairness we wish to achieve in the CATE estimation problem.

Specifically, each bound listed in Theorem \ref{thm:regret-bound} consists of three terms. The first term is an unavoidable modeling error minimized through least square estimation, which will vanish if $\mu_a(\cdot)$ lies in the function space spanned by the basis functions $\bm{b}(\cdot)$. 

The second term, $R_{1,n}$, is essentially the doubly robust second-order term that appears in Theorem \ref{thm:asymptotics}. $R_{1,n}$ converges to zero at $\sqrt{n}$ rates even when $\pi_a$ and $\mu_a$ are flexibly estimated at slower than $\sqrt{n}$ rates.

The third term, $R_2$, has important implications. 
It indicates the cumulative unfairness in the CATE with respect to the sensitive features, measured by a series of the fairness functions. If we use small values of the tolerance level $\delta_{j}$ so that the optimum $\beta^*$ is constrained by the $j$-th fairness constraint (i.e., the $j$-th constraint is active), then the corresponding Lagrange multiplier, $\lambda_j$, is positive. On the contrary, if we loosen the standard by using large values of $\delta_{j}$ so that the $j$-th fairness constraint does not constrain $\beta^*$, $\lambda_j$ is set to zero. Therefore, our attempts toward making optimal policies more fair may lead to an additional welfare loss (regret) relative to the universally maximum possible welfare $\mathcal{U}(g^*)$. In other words, there is a trade-off between fairness in the optimal policy and the maximum achievable welfare.

In short, Theorem \ref{thm:regret-bound} states that, while the proposed approach can significantly reduce the cost of estimating the nuisance components, there is still a cost associated with the fairness constraints in order to estimate the optimal policy at the required level of fairness.

Finally, we remark that a comparable analysis for $g_\mathcal{I}$ in \eqref{eqn:subgroup-rule} can also be performed with a slightly different margin condition, where the probability is restricted on the edge of the interval $\mathcal{I}$.



\section{Experiments}
\subsection{Simulation study} \label{sec:simulation}

\begin{figure*}[!t]
\centering
\begin{minipage}[t]{.57\linewidth}
  \centering
  \includegraphics[width=\linewidth]{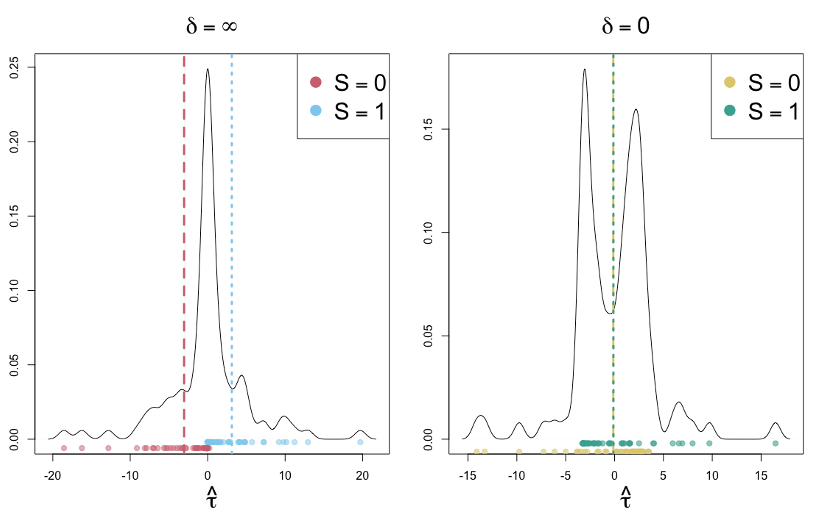}
  \captionof{figure}{Densities of $\widehat{\tau}(W)$ for $\delta=\infty$ and $\delta=0$. In each figure, two dashed vertical lines correspond to $\Pn\left\{\widehat{\tau} \mid S=0\right\}$ and $\Pn\left\{\widehat{\tau} \mid S=1\right\}$.}
  \label{fig:tauhat-uf}
\end{minipage}%
\hfill
\begin{minipage}[t]{.33\linewidth}
  \centering
  \includegraphics[width=\linewidth]{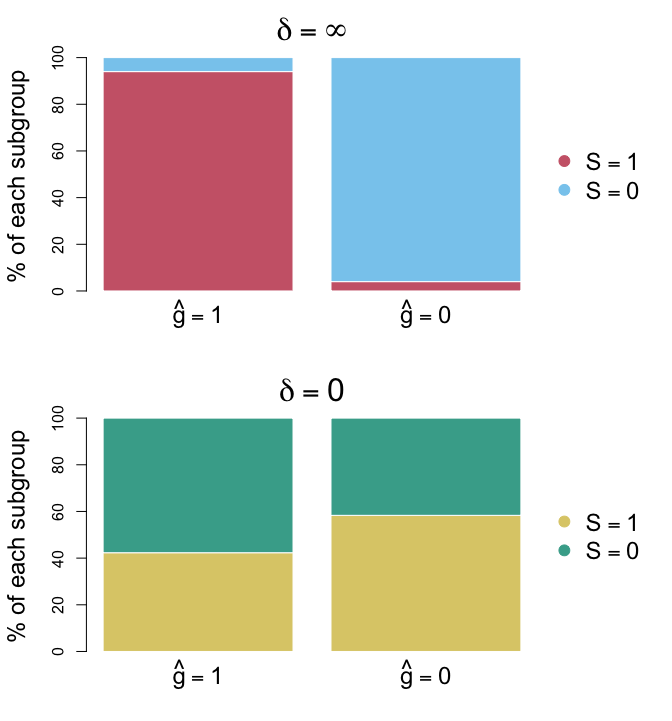}
  \captionof{figure}{Proportions of $S=1$ to $S=0$ for each decision of $\widehat{g}$.}
  \label{fig:dist-S}
\end{minipage}
\end{figure*}

In this section, we conduct a simulation study to illustrate the theoretical properties and finite-sample performance of the proposed estimators. We generate synthetic data according to the process described in Section \ref{sec:motivating-example} where we set 
\begin{align*}
& f_\mu(W) = \log(SX_1^2+10) + \exp(-SX_2/5) + SX_1, \\    
& Y^a = \mu_a(W) + \epsilon, \, \epsilon \sim N(0,1), \quad Y = Y^1 A + Y^0 (1-A).
\end{align*}
In the following experiments, we consider the independence criterion with the fairness function defined in \eqref{eqn:g-independence}. $\bm{b}(W)$ consists of the polynomial terms $W, W^2, W^3$ and $\{W_jW_kW_s\}_{j,k,s}$ to form the third-order Taylor expansion. All nuisance functions are estimated using the cross-validation super learner ensemble implemented in the \texttt{SuperLearner} R package to combine generalized additive models, adaptive regression splines, and random forests.

We first assess the ability of the proposed estimators to mitigate unfairness in the estimated CATE. To this end, we generate a sample of size $2000$ observations and estimate $\widehat{\tau}$ using \eqref{eqn:approx-opt-problem} with $K=2$ splits, under $\delta=\infty$ (i.e., with no fairness constraints) and $\delta=0$. 
We then compare the densities of $\widehat{\tau}$ for each value of $\delta$, and show how individuals belonging to distinct groups are distributed in terms of $S$ in Figure \eqref{fig:tauhat-uf}. In Figure \ref{fig:dist-S}, we compute the proportions of $S=1$ to $S=0$ for each decision of $\widehat{g}(W) = \mathbbm{1}\left\{\widehat{\tau}(W) > 0\right\}$ under $\delta=\infty$ and $\delta=0$. 

In Figure \ref{fig:tauhat-uf}, without the fairness constraints, we observe a pronounced violation of the independence criterion in $\widehat{\tau}$; individuals belonging to the group $S=1$ ($S=0$) are mostly distributed in the regime $\widehat{\tau}>0$ ($\widehat{\tau}<0$). This eventually leads to disproportionate (unfair) policies (top of Figure \ref{fig:dist-S}) where individuals with $S=1$ are to be treated.
However, when the fairness constraint is applied with $\delta=0$, this issue is largely resolved. In Figure \ref{fig:tauhat-uf}, we observe that the conditional sample means of $\widehat{\tau}$ given $S=0$ and $S=1$ are nearly identical, and many individuals belonging to the $S=1$ ($S=0$) group are shifted to the left (right) compared to the case $\delta=\infty$. This produces policies where the treated and untreated groups are more balanced along the sensitive feature, giving individuals with $S=0$ more chances to be treated (bottom of Figure \ref{fig:dist-S}). 

To illustrate the theoretical findings in Section \ref{sec:regret-and-tradeoff}, which state that the fairness in optimal policies comes at a price, we estimate $g^*$ across different values of $\delta$ ranging between 0 and 4. 
In our simulation settings, $\delta=4$ is equivalent to having no constraints ($\delta=\infty$) as the fairness constraint becomes inactive. 
This time, we also estimate the welfare $\mathcal{U}(g^*)$ and the unfairness of our policy $\E\left\{{\text{uf}}(Z){g^*}(W) \right\}$ on a separate independent sample of the same size. 
We compare our methods to the doubly robust score approach (WA21) of \citet{wager2018estimation}, and the m-hybrid (KT18-m) and e-hybrid (KT18-e) rules by \citet{kitagawa2018should}. 
We do not incorporate our fairness constraint for these approaches since it is not addressed in the original works. The results across $500$ simulations are presented in Figure \ref{fig:util-uf}.

\begin{figure}[t!]
\centering
    \includegraphics[width=.975\linewidth]{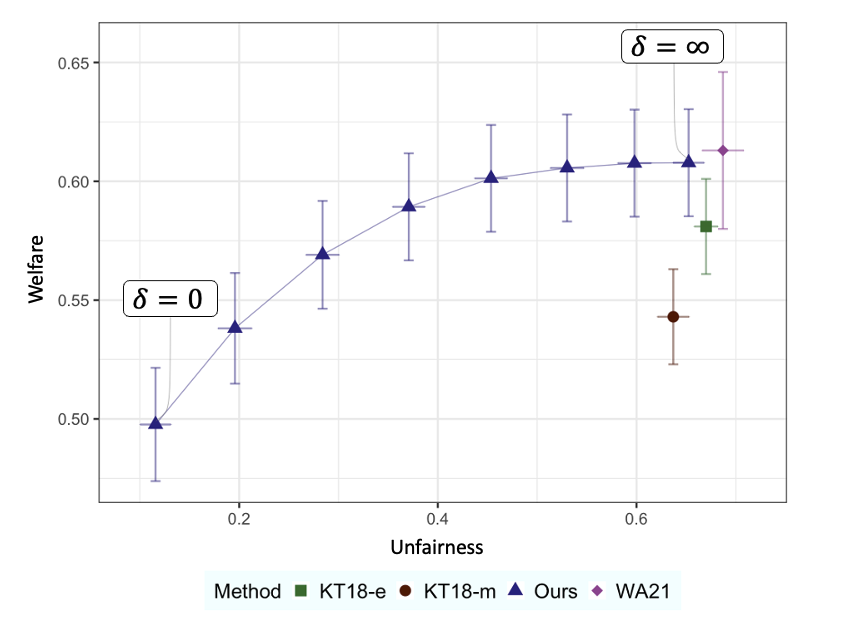}
    \caption{The curve visualizes how much welfare is required to sacrifice in order to achieve a desired level of fairness.} \label{fig:util-uf}
    \vspace*{-.1in}
\end{figure}

\begin{figure*}[!t]
\centering
\begin{minipage}{.45\linewidth}
  \centering
  \includegraphics[width=\linewidth]{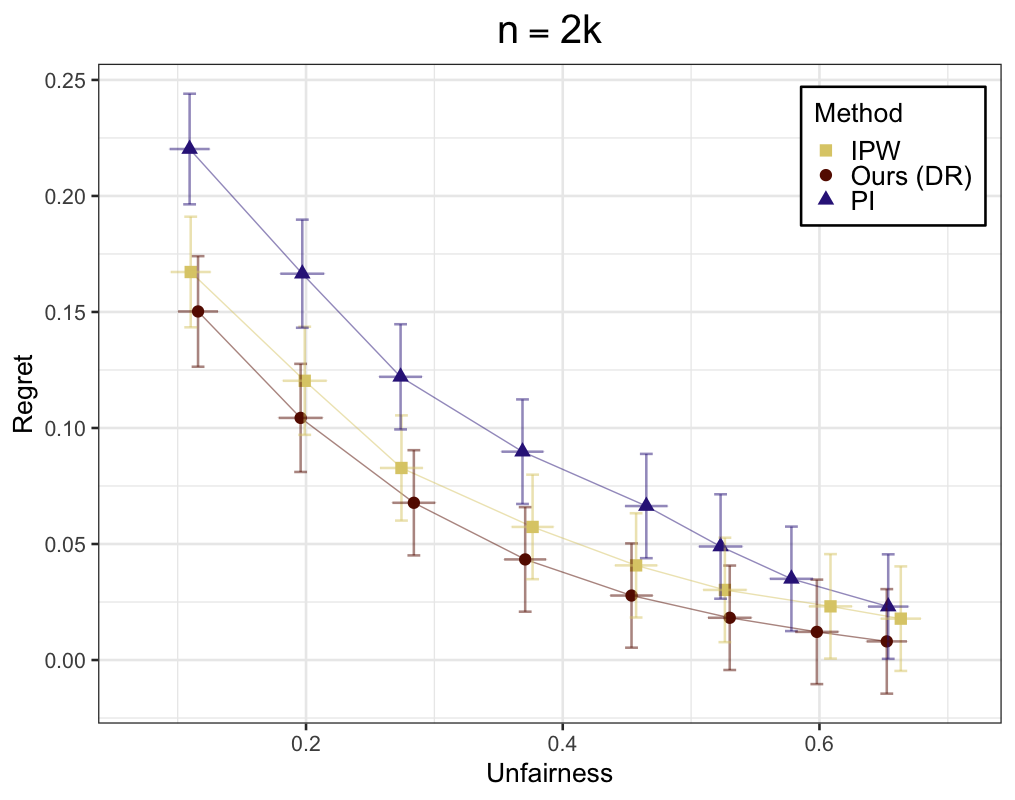}
\end{minipage}
\hfill
\begin{minipage}{.45\linewidth}
  \centering
  \includegraphics[width=\linewidth]{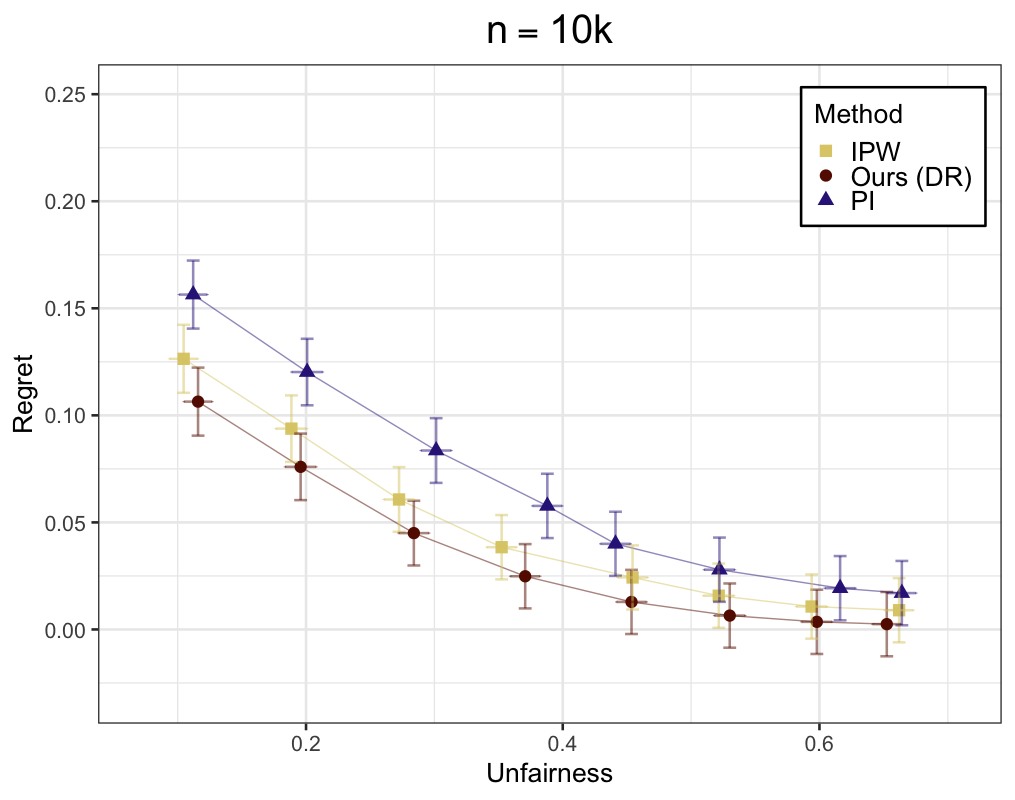}
\end{minipage}
\caption{Regret and unfairness curves of the three estimators for $n=2k$ (left) and $n=10k$ (right).} \label{fig:regret-uf}
\end{figure*}

The results illustrate the predictions of our theory. 
In Figure \ref{fig:util-uf}, the estimated welfare gradually decreases as we apply stricter tolerance levels ($\delta$). The figure also shows that one may reduce unfairness without any noticeable decrease in welfare up to a certain point.
To get the result of the bottom of Figure \eqref{fig:dist-S}, we have to sacrifice the overall welfare by $0.13$. 
This illustrates the trade-off between fairness and social welfare. Further in the simulation, we see that the cost of reducing the same amount of unfairness grows rapidly as we continue to decrease $\delta$.
Without the fairness constraints, the proposed method achieves a comparable performance to the doubly robust approach of \citet{athey2021policy}.

Finally, to illustrate the double robustness property in Theorems \ref{thm:asymptotics} and \ref{thm:regret-bound}, we consider three estimators: the proposed estimator and the previous two where the counterfactual component is estimated by the PI and IPW estimators. 
In each simulated dataset, again $\widehat{g}$ is estimated on samples of size $2000$ and $10000$ across the same values of $\delta$ as above, and the regret and the unfairness are estimated on new, separate samples. 
The results across 500 simulations are presented in Figure \ref{fig:regret-uf}. 
Again we observe the trade-off between fairness and regret. 
However, the regret of the proposed method decays faster than the other methods as the sample size grows, due to the double robustness property (the second-order product of nuisance errors).

\subsection{Case study}
We next illustrate our methods on the COMPAS dataset originally gathered to assess the risk of recidivism \citep{angwin2016machine}. 
Following \citet{mishler2021fairness}, we let $A$ represent pretrial release, with $A=1$ if defendants are released and $A=0$ if they are incarcerated. Our sensitive feature represents race, restricted to defendants who are Caucasian ($n = 2013$) and African-American ($n = 3175$), coded as $S = 0$ and $S = 1$, respectively. The outcome of interest, $Y$, is rearrest within two years. We use three covariates: age, sex, and number of prior arrests. 
We are interested in estimating the policy that minimizes the risk of recidivism, so in this case our target optimal policy is $g^*(W) = \mathbbm{1}\left\{\mu_1(W) < \mu_0(W)\right\}$ where the corresponding welfare defines the risk of recidivism. 
We use roughly two-thirds of the data to estimate $\widehat{g}$, and the rest to estimate the welfare and unfairness using the same setting as in the preceding subsection. Here, we consider two fairness criteria simultaneously: independence \eqref{eqn:g-independence} (IDP) and positive balance \eqref{eqn:g-positive-balance} (PB).

In Table \ref{tbl:compas-case}, we compare our estimators with $\delta=0$ and $\delta=\infty$ to the three other approaches used in Figure \ref{fig:util-uf}. 
As expected, without the fairness constraint, while producing similar risks, all the methods show large disparities; in fact, the estimated optimal policies suggest incarcerating more African-American defendants. 
However, when the proposed estimator is used with $\delta=0$, both disparities are jointly reduced by a significant margin with a small impact on the risk. This illustrates the effectiveness of the proposed methodology.

\begin{table}[t!]
\centering
    \begin{tabular}{cccc}\hline
        Method & \thead{Recidivism \\ Risk} & IDP & PB \\ \hline \hline
        KT18-e & 0.33 & 0.28 & 0.15 \\
        KT18-m & {0.28} & 0.31 & 0.16  \\
        WA21 & {0.25} & {0.33} & 0.17 \\
        Ours ($\delta=\infty$) & {0.27} & {0.32} & 0.16  \\
        Ours ($\delta=0$) & 0.34 & \textbf{0.09} & \textbf{0.07} \\ \hline
     \end{tabular}
     \caption{The estimated risk of recidivism and unfairness.} \label{tbl:compas-case}
     \vspace*{-.1in}
\end{table}

\section{Discussion}

We propose a new framework for fair and robust estimation of heterogeneous treatment effects and discuss its implications for policy learning. Our method is easily implementable and allows practitioners to flexibly incorporate various fairness constraints to meet the desired level of fairness. This affords new opportunities to leverage recent advances in algorithmic fairness for robust estimation of heterogeneous treatment effects. The proposed CATE estimator enables us to design targeting policies in a fairer fashion. Importantly, our theoretical results in Section \ref{sec:regret-and-tradeoff} help us to understand the trade-off between fairness and welfare. 
Although this study focused on univariate and binary sensitive variables, our proposed framework is easily extendable to a multivariate sensitive variable with discrete support.

There are some caveats to mention, and ways in which our work could be improved. Our method is intended to attain a desired level of fairness in CATE estimation, not in targeting policies. Different methods are required to estimate a particular targeting policy that satisfies a specific fairness criterion. In a forthcoming paper, we develop a novel nonparametric estimator for this kind of fair optimal policies. Further, although the setup we consider here is widely used, in future work it is desirable to consider extensions to time-varying treatments, instrumental variables, and mediation. 
Finally, it is also of interest to apply the proposed methods to various real-world problems, to bring new insights into fair heterogeneous treatment effect estimation and policy learning.

\section{Acknowledgments}
This work was completed while Kwangho Kim was a research associate at Harvard Medical School. This work was supported by the National Research Foundation of Korea (NRF) grant funded by the Korea governement (MSIT)(No. NRF-2022M3J6A1063595).


\bibliography{bibliography}
\bibliographystyle{icml2023}

\newpage
\appendix
\onecolumn

\begin{center}
{\large\bf APPENDIX}
\end{center}
\vspace*{.1in}

\section{Fairness in Optimal Policy} \label{appsec:nonlinearity-example}
Even though a certain fairness criterion is met for a prediction $f(W)$ that we are concerned with, we might not be able to achieve the same level of fairness for $\mathbbm{1}\left\{f(W)>0\right\}$. To illustrate this, consider the criterion of independence and assume that for some $\sigma>0$,
\begin{align*}
    & f(W) \mid S=0 \sim N(-\delta, \sigma^2) \\
    & f(W) \mid S=1 \sim N(\delta, \sigma^2).
\end{align*}

The above setup naturally satisfies the independence fairness constraint:
\begin{align*}
    \left\vert \E\left\{f(W) \mid S=0 \right\} - \E\left\{f(W) \mid S=1 \right\} \right\Vert_{2,\Pb} \leq \delta.
\end{align*}

However we get
\begin{align*}
    & \left\vert \E\left[ \mathbbm{1}\left\{ f(W) > 0\right\} \mid S=1 \right] - \E\left[ \mathbbm{1}\left\{ \beta^\top\bm{b}(W) > 0 \right\} \mid S=0 \right] \right\Vert_{2,\Pb} \\
    & = \frac{1}{2}\left\{ \text{erf}\left( \frac{\delta}{\sigma} \right) - \text{erf}\left( -\frac{\delta}{\sigma} \right) \right\}.
\end{align*}

Hence, when $\sigma$ is large so there is sufficient overlap between the two conditional distributions, RHS of the above equation vanishes as $\delta \rightarrow 0$ and thereby the criterion of independence will likely to hold for $\mathbbm{1}\left\{ f(W) > 0\right\}$ as well. On the other extreme, when $\sigma$ is very small so there is no overlap, then RHS remains as $1/2$ and the same fairness criterion would not hold at all for $\mathbbm{1}\left\{ f(W) > 0\right\}$.

This shows that the following fairness condition alone
\[
    \left\vert \E\left\{\text{uf}(Z)\beta^\top\bm{b}(W)\right\} \right\vert \leq \delta
\] 
does not guarantee that the same condition holds for $\mathbbm{1}\left\{ \beta^\top\bm{b}(W) > 0 \right\}$ in general. 

However, this issue can be greatly alleviated by employing the fairness function such as \eqref{eqn:g-positive-balance} under the margin condition. In this case, the unfairness would differ at most by the order of $\delta + \left\Vert\tau(W) - \beta^\top\bm{b}(W)\right\Vert_{2,\Pb}$.

\section{Formal Definitions of the Regularity Conditions} \label{appsec:definitions-LICQ-SC}
Let $\mathcal{C}_{\text{fair}} \coloneqq \left\{\beta \bigm\vert \left\vert \E\left\{\text{uf}_j(Z)\beta^\top\bm{b}(W) \right\} \right\vert \leq \delta_j, \, j \in J \right\}$. For simplicity, define a set of inequality constraints $\mathcal{C}=\{\beta \mid g_j(\beta) \leq 0, 1 \leq j \leq M\}$ such that $\mathcal{C}=\{\mathcal{C}_{\text{fair}}, \mathcal{C}_{\text{lin}}\}$, where $\mathcal{C}_{\text{lin}}$ is a set of extra linear constraints
Then for a feasible point $\bar{\beta} \in \mathcal{C}$, we define the active index set.

\begin{definition}[Active set]
We define the active index set $J_0$ by
\[
J_0(\bar{\beta}) = \{1\leq j \leq M \mid g_j(\bar{\beta}) = 0 \}.
\]
\end{definition}

In what follows, we define LICQ and SC with respect to \eqref{eqn:true-opt-problem}.

\begin{definition}[LICQ]
\textit{Linear independence constraint qualification} (LICQ) is satisfied at $\bar{\beta} \in \mathcal{S}$ if the vectors $\nabla_\beta g_j(\bar{\beta})$, $j \in J_0(\bar{\beta})$ are linearly independent. 
\end{definition}

\begin{definition}[SC]
Let $L({\beta},{\gamma})$ be the Lagrangian.
\textit{Strict Complementarity} (SC) is satisfied at $\bar{\beta} \in \mathcal{S}$ if, with multipliers $\bar{\gamma}_j \geq 0$, $j \in J_0(\bar{\beta})$, the Karush-Kuhn-Tucker (KKT) condition 
\begin{align*}
    \nabla_\beta L(\bar{\beta},\bar{\gamma}) \coloneqq  \nabla_\beta \mathcal{L}(\bar{\beta}) + \underset{j \in J_0(\bar{\beta})}{\sum}\bar{\gamma}_j  \nabla_\beta g_j(\bar{\beta}) = 0,
\end{align*}
is satisfied such that
$
    \bar{\gamma}_j > 0, \forall j \in J_0(\bar{\beta}).
$    
\end{definition}
LICQ is arguably one of the most widely-used constraint qualifications that admit the first-order necessary conditions.
SC means that if the $j$-th inequality constraint is active then the corresponding dual variable is strictly positive, so exactly one of them is zero for each $1 \leq j \leq m$. SC is widely used in the optimization literature, particularly in the context of parametric optimization \citep[e.g.,][]{shapiro2014lectures}.

Next, we give the second-order condition that can replace Assumption \ref{assumption:A1}.

\begin{enumerate}[label={}, leftmargin=*]
        \item \customlabel{assumption:A1'}{(A1')} For each optimal solution $\beta^*$ in \eqref{eqn:true-opt-problem},
      \[	   
	        \varsigma^\top \E[\bm{b}(W)\bm{b}(W)^\top] \varsigma > 0, \quad \forall \varsigma \in \{\varsigma \in \R^k \mid \bm{b}(W)^\top\varsigma \leq 0, \text{uf}_j(Z)\bm{b}(W)^\top\varsigma \leq 0, j \in J_0(\beta^*) \} \setminus \{0\}.   
       \]
\end{enumerate}
Here, $J_0(\beta^*)$ is the active index set for $\beta^*$.
Assumption \ref{assumption:A1'} implies that $\beta^*$ is locally isolated, and is weaker than Assumption \ref{assumption:A1} which we used in the main text for simplicity. This guarantees that the quadratic growth condition holds at each $\beta^*$ \citep{shapiro2014lectures}.

\section{Estimation of Non-smooth Functionals under the Margin Condition} \label{app:margin-condition}

We first consider the fairness function which involves the non-smooth component of the form $\mathbbm{1}\{\vartheta(W) > 0\}$ for some $\vartheta$. We first provide the following useful lemma.

\begin{lemma}  \label{lem:non-smooth-mc}  
    For any functionals $\vartheta, \xi: \mathcal{Z} \rightarrow \R$, $\Vert \xi \Vert_{\infty,\Pb} < \infty$ and their estimates $\widehat{\xi}, \widehat{\vartheta}$, let $\psi = \Pb\left\{\phi_{\vartheta} \right\} \equiv \Pb\left\{\xi(Z) \mathbbm{1}\{\vartheta(W) > 0\} \right\}$ and consider the plug-in estimator $\widehat{\psi} = \Pn\left\{\widehat{\phi}_{{\vartheta}_{-B}} \right\} \equiv \frac{1}{K}\sum_{b=1}^K\Pn^b\left\{ \widehat{\phi}_{{\vartheta}_{-b}} \right\} = \frac{1}{K}\sum_{b=1}^K\Pn^b\left\{ \widehat{\xi}(Z) \mathbbm{1}\{\widehat{\vartheta}_{-b}(W) > 0\} \right\}$. Assume that
    \begin{enumerate}
        \item For any margin exponent $\alpha > 0$ and for all $t \geq 0$, the following margin condition holds:
        \begin{align*} 
            \Pb(\vert \vartheta(W)  \vert \leq t) \lesssim t^{\alpha} ,   
        \end{align*}        
        \item $\Pb\left[ \mathbbm{1}\{\widehat{\vartheta}(W) > 0\} \neq \mathbbm{1}\{\vartheta(W) > 0\}\right] + \Vert \widehat{\xi} - \xi \Vert_{2,\Pb} = o_\Pb(1).$
    \end{enumerate}
    Then, 
    \begin{align*}
        \widehat{\psi} - \psi = O_\Pb\left( \Vert \widehat{\xi} - \xi \Vert_{1,\Pb} + \Vert \widehat{\vartheta} - \vartheta \Vert_{\infty,\Pb}^\alpha + \frac{1}{\sqrt{n}} \right).
    \end{align*}
    If we further assume 
    \begin{enumerate}
    \item[3] 
        $
            \Vert \widehat{\xi} - \xi \Vert_{1,\Pb} + \Vert \widehat{\vartheta} - \vartheta \Vert_{\infty,\Pb}^\alpha = o_\Pb\left(n^{-\frac{1}{2}}\right),
        $
    \end{enumerate}
    then 
    \begin{align*}
        \sqrt{n}\left(\widehat{\psi} - \psi \right) \xrightarrow{d} N\left(0,\var(\phi)\right).
    \end{align*}    
\end{lemma}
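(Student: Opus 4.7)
The plan is to follow the standard von Mises / influence-function decomposition adapted to sample splitting, isolating a classification error term to which the margin condition can be applied. Writing $\widehat{\phi}(Z) = \widehat{\xi}(Z)\mathbbm{1}\{\widehat{\vartheta}(W)>0\}$ and $\phi(Z) = \xi(Z)\mathbbm{1}\{\vartheta(W)>0\}$, I would decompose
\begin{align*}
  \widehat{\psi}-\psi
  &= (\Pn-\Pb)\phi \;+\; (\Pn-\Pb)(\widehat{\phi}-\phi) \;+\; \Pb(\widehat{\phi}-\phi),
\end{align*}
where the middle term is to be controlled through cross-fitting and the last (conditional bias) term through the margin condition. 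The first term is $O_\Pb(n^{-1/2})$ by the ordinary CLT since $\xi$ is bounded and the indicator is bounded.

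For the conditional bias term, I would split further as
\begin{align*}
  \Pb(\widehat{\phi}-\phi)
  = \Pb\bigl[(\widehat{\xi}-\xi)\mathbbm{1}\{\widehat{\vartheta}>0\}\bigr]
  + \Pb\bigl[\xi\bigl(\mathbbm{1}\{\widehat{\vartheta}>0\}-\mathbbm{1}\{\vartheta>0\}\bigr)\bigr].
\end{align*}
The first piece is immediately $O(\|\widehat{\xi}-\xi\|_{1,\Pb})$. For the second piece, the key deterministic inequality is
\begin{align*}
  \bigl|\mathbbm{1}\{\widehat{\vartheta}>0\}-\mathbbm{1}\{\vartheta>0\}\bigr|
  \;\leq\; \mathbbm{1}\bigl\{|\vartheta(W)|\leq |\widehat{\vartheta}(W)-\vartheta(W)|\bigr\}
  \;\leq\; \mathbbm{1}\bigl\{|\vartheta(W)|\leq \|\widehat{\vartheta}-\vartheta\|_{\infty,\Pb}\bigr\},
\end{align*}
which, combined with boundedness of $\xi$ and the margin condition, yields the $\|\widehat{\vartheta}-\vartheta\|_{\infty,\Pb}^{\alpha}$ contribution. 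This is the step I expect to be the crux of the argument and the only place where assumption 1 enters.

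For the empirical process term, I would work conditionally on the training fold $\{i:B_i\neq b\}$ used to construct $\widehat{\xi}$ and $\widehat{\vartheta}_{-b}$, so that on the evaluation fold $\widehat{\phi}-\phi$ behaves as a fixed function. Then by Chebyshev's inequality (sample splitting version, cf. Lemma 2 of \citet{kennedy2020optimal}),
\begin{align*}
  (\Pn^b-\Pb)(\widehat{\phi}_{-b}-\phi) = O_\Pb\!\left(\|\widehat{\phi}_{-b}-\phi\|_{2,\Pb}/\sqrt{n}\right).
\end{align*}
Bounding $\|\widehat{\phi}_{-b}-\phi\|_{2,\Pb}^2$ by $2\|\widehat{\xi}-\xi\|_{2,\Pb}^2 + 2\|\xi\|_{\infty,\Pb}^{2}\,\Pb[\mathbbm{1}\{\widehat{\vartheta}_{-b}>0\}\neq \mathbbm{1}\{\vartheta>0\}] = o_\Pb(1)$ (using assumption 2 and that squared $\{0,1\}$-valued differences equal their absolute values) gives $(\Pn-\Pb)(\widehat{\phi}-\phi)=o_\Pb(n^{-1/2})$ after averaging over folds. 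Summing the three pieces yields the first conclusion.

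Finally, for the asymptotic-normality part, assumption 3 forces $\Pb(\widehat{\phi}-\phi)=o_\Pb(n^{-1/2})$ and the above empirical process bound is already $o_\Pb(n^{-1/2})$, so $\sqrt{n}(\widehat{\psi}-\psi) = \sqrt{n}(\Pn-\Pb)\phi + o_\Pb(1)$. The ordinary CLT then gives $\sqrt{n}(\widehat{\psi}-\psi) \xrightarrow{d} N(0,\var(\phi))$, provided $\var(\phi)<\infty$, which follows from boundedness of $\xi$. The main obstacle throughout is the margin-condition bound, both in its direct use for the bias and in verifying $L_2$ consistency of $\widehat{\phi}-\phi$; everything else is a routine cross-fitting argument.
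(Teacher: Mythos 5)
Your proposal is correct and follows essentially the same route as the paper's proof: the identical three-term decomposition, the same bound of the empirical-process term via $L_2$ consistency of $\widehat{\phi}-\phi$ and the sample-splitting Chebyshev lemma, and the same treatment of the conditional bias via the deterministic inequality $\vert\mathbbm{1}\{\widehat{\vartheta}>0\}-\mathbbm{1}\{\vartheta>0\}\vert\leq\mathbbm{1}\{\vert\vartheta\vert\leq\vert\widehat{\vartheta}-\vartheta\vert\}$ combined with the margin condition. The only cosmetic difference is which Kennedy lemma you cite for the cross-fitting step; the argument is the same.
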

\begin{proof}
We can write $\widehat{\psi} - \psi$ as
\begin{align*}
    \left(\Pn - \Pb\right)\phi_{\vartheta} + \left(\Pn - \Pb\right)\left(\widehat{\phi}_{{\vartheta}_{-B}} - \phi_{\vartheta} \right) +  \Pb\left(\widehat{\phi}_{{\vartheta}_{-B}} - \phi_{\vartheta} \right)
\end{align*}
For the second term, noting that  $\left(\Pn - \Pb\right)\left(\widehat{\phi}_{{\vartheta}_{-B}} - \phi_{\vartheta} \right) = \sum_{b=1}^K\left(\Pn - \Pb\right)\left\{(\widehat{\phi}_{{\vartheta}_{-b}} -  \phi_{\vartheta})(\mathbbm{1}(B=b)\right\}$ and $n \lesssim \frac{n}{K}$, we have
\begin{align*}
    \left\Vert \left(\Pn - \Pb\right)\left\{(\widehat{\phi}_{{\vartheta}_{-b}} -  \phi_{\vartheta})(\mathbbm{1}(B=b)\right\} \right\Vert_{2,\Pb} & \leq \left\Vert \widehat{\phi}_{{\vartheta}_{-b}} -  \phi_{\vartheta} \right\Vert_{2,\Pb} \lesssim \left\Vert \widehat{\phi}_{{\vartheta}} -  \phi_{\vartheta} \right\Vert_{2,\Pb} \\
    & = \left\Vert (\widehat{\xi} - {\xi}) \mathbbm{1}\{\widehat{\vartheta} > 0\} + \xi\left( \mathbbm{1}\{\widehat{\vartheta} > 0\} - \mathbbm{1}\{{\vartheta} > 0\} \right) \right\Vert_{2,\Pb} \\
    & \lesssim \Pb\left[ \mathbbm{1}\{\widehat{\vartheta}(W) > 0\} \neq \mathbbm{1}\{\vartheta(W) > 0\}\right] + \Vert \widehat{\xi} - \xi \Vert_{2,\Pb}.
\end{align*}
Hence, if $\Pb\left[ \mathbbm{1}\{\widehat{\vartheta}(W) > 0\} \neq \mathbbm{1}\{\vartheta(W) > 0\}\right] + \Vert \widehat{\xi} - \xi \Vert_{2,\Pb} = o_\Pb(1)$, then the the second term is $o_\Pb(\frac{1}{\sqrt{n}})$ by \citet[][Lemma 1]{kennedy2020sharp}.

The third term can be rewritten as
\begin{align*}
    \Pb\left\{(\widehat{\xi} - {\xi}) \mathbbm{1}(\widehat{\vartheta}_{-B} > 0)\right\} + \Pb\left[\xi\left\{\mathbbm{1}(\widehat{\vartheta}_{-B} > 0) - \mathbbm{1}({\vartheta} > 0) \right\} \right].
\end{align*}

The first term in the above display can be simply bounded as
\begin{align*}
    \Pb\left\{(\widehat{\xi} - {\xi}) \mathbbm{1}(\widehat{\vartheta}_{-B} > 0)\right\} \lesssim \Vert \widehat{\xi} - \xi \Vert_{1,\Pb}.
\end{align*}
For the second term, using the similar logic as before, we have that
\begin{align*}
    \Pb\left[\xi\left\{\mathbbm{1}(\widehat{\vartheta}_{-B} > 0) - \mathbbm{1}({\vartheta} > 0) \right\} \right] & \lesssim \Pb\left\vert \xi\left\{ \mathbbm{1}(\widehat{\vartheta} > 0) - \mathbbm{1}({\vartheta} > 0) \right\} \right\vert \\
    & \leq \Vert \xi \Vert_{\infty, \Pb} \Pb\left\vert \left\{ \mathbbm{1}(\widehat{\vartheta} > 0) - \mathbbm{1}({\vartheta} > 0) \right\} \right\vert \\
    & \lesssim \Pb\left( \vert \vartheta\vert \leq \vert \widehat{\vartheta} - \vartheta\vert \right) \\
    & \leq \Vert \widehat{\vartheta} - \vartheta \Vert_{\infty,\Pb}^\alpha,
\end{align*}
where the last inequality follows by the margin condition. Putting the two pieces together, we conclude
\begin{align*}
    \Pb\left(\widehat{\phi}_{{\vartheta}_{-B}} - \phi_{\vartheta} \right) & \lesssim \Vert \widehat{\xi} - \xi \Vert_{1,\Pb} + \Vert \widehat{\vartheta} - \vartheta \Vert_{\infty,\Pb}^\alpha     
\end{align*}
If this term is $o_\Pb\left(n^{-\frac{1}{2}}\right)$ (Condition 3), then the $\sqrt{n}$-consistency and asymptotic normality follows by the central limit theorem and Slutsky's theorem.
\end{proof}

Lemma \ref{lem:non-smooth-mc} gives corresponding convergence rates, as well as conditions under which $\widehat{\psi}$ is asymptotically normal and efficient. 

Now consider the unfairness estimand $\E\left\{\text{uf}_j(Z)\bm{b}(W)\right\}$ with the fairness function \eqref{eqn:g-positive-balance}
\begin{align*} 
    \text{uf}_j(Z)
    = \frac{(1-S)\mathbbm{1}\left\{\tau(W)>0\right\} }{\E\left[(1-S)\mathbbm{1}\left\{\tau(W)>0\right\}\right]}
    - \frac{S\mathbbm{1}\left\{\tau(W)>0\right\}}{\E\left[S \mathbbm{1}\left\{\tau(W)>0\right\}\right]}. 
\end{align*}

By Lemma \ref{lem:non-smooth-mc} and the continuous mapping theorem, one can deduce that the estimator 
\begin{align*}
    & \Pn\left\{\widehat{\text{uf}}_j\bm{b}(W) \right\} \\
    &= \frac{1}{K}\sum_{b=1}^K\Pn^b\Bigg(\Bigg[\frac{(1-S)\mathbbm{1}\left\{\widehat{\mu}_{1,-b}(W) - \widehat{\mu}_{0,-b}(W)>0\right\} }{\frac{1}{K}\sum_{b=1}^K\Pn^b\left[(1-S)\mathbbm{1}\left\{\widehat{\mu}_{1,-b}(W) - \widehat{\mu}_{0,-b}(W)>0\right\}\right]} \notag \\
    & \quad - \frac{S\mathbbm{1}\left\{\widehat{\mu}_{1,-b}(W) - \widehat{\mu}_{0,-b}(W)>0\right\}}{\frac{1}{K}\sum_{b=1}^K\Pn^b\left[S \mathbbm{1}\left\{\widehat{\mu}_{1,-b}(W) - \widehat{\mu}_{0,-b}(W)>0\right\}\right]}\Bigg]\bm{b}(W)\Bigg)
\end{align*}
attains the $\sqrt{n}$ rate of convergence if 
\begin{align*}
    & \max \Vert \widehat{\mu}_{a}- \mu_{a} \Vert_{\infty, \Pb}^{\alpha}=O_\Pb(n^{-\frac{1}{2}}), \\
    & \Pb\left[ \mathbbm{1}\{\widehat{\tau}(W) > 0\} \neq \mathbbm{1}\{\tau(W) > 0\}\right] = o_\Pb(1),
\end{align*}
and is asymptotically normal and efficient if it additionally holds that
\[
\max \Vert \widehat{\mu}_{a}- \mu_{a} \Vert_{\infty, \Pb}^{\alpha}=o_\Pb(n^{-\frac{1}{2}}).
\]

Importantly, Lemma \ref{lem:non-smooth-mc} could also be used for constructing efficient estimators for target functionals involving a non-smooth component of the more general form  $\mathbbm{1}\left\{\tau(W) \in \mathcal{I}\right\}$ for some interval (or union of intervals) $\mathcal{I}$ on $\R$, as long as $\mathbbm{1}\left\{\tau(W) \in \mathcal{I}\right\}$ can be expressed as a smooth function of $\left\{\mathbbm{1}\left\{\kappa_{1,j}\tau(W) + \kappa_{2,j} > 0\right\}\right\}_{j}$ for some constants $\kappa_{1,j}, \kappa_{2,j} \in \R$.

\section{Proofs}
\subsection{Proof of Theorem \ref{thm:asymptotics}}
One may rewrite \eqref{eqn:true-opt-problem} as the following nonlinear program with linear constraints
\begin{equation*}
\label{eqn:nlp-unobs}
\begin{aligned}
    & \underset{\beta \in \R^k}{\text{minimize}} \quad  f(\beta, T)  \\    
    & \text{subject to } \quad  C\beta \leq \delta,
\end{aligned}  
\end{equation*}
where 
\begin{align*}
    \delta &= \left[\delta_1, \ldots, \delta_m \right]^\top,\\
    T & = \{\E[\bm{b}(W)\bm{b}(W)^\top]_{i,j}, \E[(Y^1-Y^0)\bm{b}(W)]_i\}_{i,j=1}^k \in \R^{k(k+1)}, \\
    C &= [\E\{\text{uf}_1(Z)\bm{b}(W)\}, \ldots ,\E\{\text{uf}_m(Z)\bm{b}(W)\}]^\top \in \R^{m \times k}.
\end{align*}
According to our approximating program \eqref{eqn:approx-opt-problem} we have that,
\begin{align*}
    \widehat{T} &= \{\Pn[\bm{b}(W)\bm{b}(W)^\top]_{i,j}, \Pn\left\{  \left(\varphi_1(Z;\widehat{\vartheta}_{-B})-\varphi_0(Z;\widehat{\vartheta}_{-B})\right) \bm{b}(W) \right\}_i\}_{i,j=1}^k, \\
    \widehat{C} &= [\Pn\left\{\widehat{\text{uf}}_1\bm{b}(W)\right\}, \ldots, \Pn\left\{\widehat{\text{uf}}_m\bm{b}(W) \right\} ]^\top.
\end{align*}
Given the conditions \ref{assumption:A1} - \ref{assumption:A3}, \ref{assumption:A5}, by Lemma A.1 of \cite{kim2022cfclassifier}, it follows that
\begin{align*}
    & \Vert \widehat{T} - T \Vert_2 = O_\Pb\left(\max\Vert \widehat{\pi} - \pi \Vert_{2,\Pb} \Vert \widehat{\mu}_{a}- \mu_{a} \Vert_{2,\Pb} + n^{-\frac{1}{2}} \right), \\
    & \Vert \widehat{C} - C \Vert_F = O_\Pb\left(\max\Vert \widehat{\pi} - \pi \Vert_{2,\Pb} \Vert \widehat{\mu}_{a}- \mu_{a} \Vert_{2,\Pb} + n^{-\frac{1}{2}} \right).
\end{align*}
Further, under the conditions \ref{assumption:A1}, Theorem 3.1 of \cite{kim2022counterfactualmv} gives that
\begin{align*}
    \Vert \widehat{\beta} - \beta^* \Vert_2 &= O_\Pb\left(\max\Vert \widehat{\pi} - \pi \Vert_{2,\Pb} \Vert \widehat{\mu}_{a}- \mu_{a} \Vert_{2,\Pb} + n^{-\frac{1}{2}} \right).
\end{align*}

If we further assume the conditions \ref{assumption:A4},\ref{assumption:A6} so that every element in $\widehat{T}$, $\widehat{C}$ is indeed a $\sqrt{n}$-consistent and asymptotically normal estimator for the corresponding element in ${T}$, ${C}$, then the asymptotic normality with $\sqrt{n}$ rates can be obtained by Theorem 3.2 of \cite{kim2022counterfactualmv}, under the necessary regularity conditions described in Theorem \ref{thm:asymptotics}.

\subsection{Proof of Lemma \ref{lem:comparison-inequalities}} \label{appsec:proof-comparison-lemma}

\begin{proof}
The proof mimics the proofs of Lemma 5.1 and Lemma 5.2 in \cite{audibert2007fast}. To show the first inequality, note that
\begin{align*}
    \mathcal{U}(g^*) - \mathcal{U}(\widehat{g}) &= \Pb\left[ \Delta  \left( \mathbbm{1}\left\{\Delta > 0\right\} - \mathbbm{1}\left\{\widehat{\Delta} > 0\right\} \right)\right] \\
    & \leq \Pb\left[ \left\vert \Delta \right\vert  \left( \mathbbm{1}\left\{ \left\vert \Delta \right\vert \leq \left\vert \widehat{\Delta} - \Delta \right\vert \right\} \right)\right] \\
    & \leq \left\Vert \widehat{\Delta} - \Delta \right\Vert_{\infty, \Pb} \Pb\left\{ \left\vert \Delta \right\vert \leq \left\Vert \widehat{\Delta} - \Delta \right\Vert_{\infty, \Pb} \right\} \\
    & \lesssim \left\Vert \widehat{\Delta} - \Delta \right\Vert_{\infty, \Pb}^{\alpha + 1},
\end{align*}
where the first inequality follows by Lemma 1 of \cite{kennedy2020sharp} and the last by the margin condition.

Next, for any $t > 0$ we have
\begin{align*}
    \mathcal{U}(g^*) - \mathcal{U}(\widehat{g}) &\leq \Pb\left[ \left\vert \Delta \right\vert  \left( \mathbbm{1}\left\{ \left\vert \Delta \right\vert  \leq \left\vert \widehat{\Delta} - \Delta \right\vert \right\} \right) \mathbbm{1}\left\{ \left\vert \Delta \right\vert 
    \leq t \right\}\right] \\
    & \quad + \Pb\left[ \left\vert \Delta \right\vert  \left( \mathbbm{1}\left\{ \left\vert \Delta \right\vert  \leq \left\vert \widehat{\Delta} - \Delta \right\vert \right\} \right) \mathbbm{1}\left\{ \left\vert \Delta \right\vert 
    > t \right\}\right] \\
    & \leq \Pb\left[ \left\vert \widehat{\Delta} - \Delta \right\vert \mathbbm{1}\left\{ \left\vert \Delta \right\vert  \leq t \right\} \right] + \Pb\left[ \left\vert \widehat{\Delta} - \Delta \right\vert \mathbbm{1}\left\{ \left\vert \widehat{\Delta} - \Delta \right\vert  > t \right\} \right] \\
    & \leq \left\Vert \widehat{\Delta} - \Delta \right\Vert_{q,\Pb} \Pr\left\{ \left\vert \Delta \right\vert \leq t \right\}^\frac{q-1}{q} + \left\Vert \widehat{\Delta} - \Delta \right\Vert_{q,\Pb} \left(\frac{\Pb\vert \widehat{\Delta} - \Delta \vert^q}{t^q}\right)^{\frac{q-1}{q}} \\
    & \lesssim \left\Vert \widehat{\Delta} - \Delta \right\Vert_{q,\Pb} t^\frac{q-1}{q} + \frac{\left\Vert \widehat{\Delta} - \Delta \right\Vert_{q,\Pb}^q}{t^{q-1}},
\end{align*}
where the third inequality follows by the Hölder and Markov inequalities and the last by the margin condition. Now, the RHS in the last display is minimized when $t = O\left( \left\Vert \widehat{\Delta} - \Delta \right\Vert_{q,\Pb}^{\frac{q}{q+\alpha}} \right)$, yielding 
\begin{align*}
    \mathcal{U}(g^*) - \mathcal{U}(\widehat{g}) \lesssim  \left\Vert   \widehat{\Delta} - \Delta \right\Vert_{q,\Pb}^{\frac{q(1+\alpha)}{q+\alpha}}.
\end{align*}
\end{proof}

\subsection{Proof of Theorem \ref{thm:regret-bound}} \label{appsec:proof-tradeoff}

\begin{proof}
By the first inequality in Lemma \ref{lem:comparison-inequalities}, we have
\begin{align*}
    \mathcal{U}(g^*) - \mathcal{U}(\widehat{g}) & \lesssim \left\Vert \tau(W) - \widehat{\beta}^\top\bm{b}(W) \right\Vert_{\infty,\Pb}^{\alpha+1}
\end{align*}

Recall that $\beta^*$ and $\widetilde{\beta}$ are optimal solutions to \eqref{eqn:true-opt-problem} and \eqref{eqn:true-unconstrained}, respectively. 
Then $\forall a$, by the triangle and Cauchy–Schwarz inequalities,
\begin{align*}
    \left\vert \tau(W) - \widehat{\beta}^\top\bm{b}(W) \right\Vert_{2,\Pb}
    &\leq \left\vert \tau(W) - {\widetilde{\beta}}^\top\bm{b}(W)\right\Vert_{2,\Pb} + \left\Vert \bm{b}(W) \right\Vert_2 \left\{ \Vert \widetilde{\beta} - \beta^* \Vert_2 + \Vert \beta^* - \widehat{\beta} \Vert_2\right\}.
\end{align*}

Next, for some $\lambda \in \R^m$ consider the following Lagrange form of \eqref{eqn:true-opt-problem}.
\begin{equation}
\label{eqn:P-mu-lagrange}
\begin{aligned}
    & \underset{\beta \in \mathcal{B}}{\text{minimize}} \quad \E\left\{ \left( Y^1-Y^0- \beta^\top\bm{b}(W) \right)^2\right\} + \sum_{j=1}^m \lambda_j \left[ \beta^\top \E\left\{\text{uf}_j(Z)\bm{b}(W) \right\} \right]^2
\end{aligned}     
\end{equation}

Since our constraint set consists of only linear constraints strong duality holds, and there exists a dual solution $\lambda \geq 0$ to \eqref{eqn:true-opt-problem} such that any solution $\widetilde{\beta}$ in \eqref{eqn:true-opt-problem} minimizes
\begin{align*}
    \E\left\{ \left( Y^1-Y^0- \beta^\top\bm{b}(W) \right)^2\right\} + \sum_{j=1}^m \lambda_j  \left( \left[ \beta^\top \E\left\{\text{uf}_j(Z)\bm{b}(W) \right\} \right]^2 - \delta_j \right).
\end{align*}
Hence, $\widetilde{\beta}$ is a solution in \eqref{eqn:P-mu-lagrange} as well. 

Now, consider the following parametrized program 
\begin{equation}
\label{eqn:P-mu-param}
\begin{aligned}
    & \underset{\beta \in \mathcal{B}}{\text{minimize}} \quad f(\beta, \xi) \coloneqq \E\left\{ \left( Y^1-Y^0- \beta^\top\bm{b}(W) \right)^2\right\} + \sum_{j=1}^m \left(\beta^\top \xi_j\right)^2,
\end{aligned}     \tag{$\mathsf{P}(\xi)$}  
\end{equation}
with $\xi=(\xi_1,...,\xi_m)$, $\xi_j \in R^k$, and let $\beta(\xi)$ denote an optimal solution of $\mathsf{P}(\xi)$.
\eqref{eqn:true-unconstrained} and \eqref{eqn:P-mu-lagrange} correspond to \eqref{eqn:P-mu-param} with the parameter $\xi_j=0$ and $\xi_j= \sqrt{\lambda_j}\E\left\{\text{uf}_j(Z)\bm{b}(W) \right\}$, $\forall j$, respectively. 

By positive-definiteness of $\E[\bm{b}(W)\bm{b}(W)^\top]$, $\mathsf{P}(0)$ is strongly convex everywhere. Consequently the quadratic growth condition holds at $\beta(0)$ such that
\begin{align*}
    c \Vert \beta(0) - \beta' \Vert_2^2 \leq f(\beta', 0) - f\left(\beta(0), 0\right)
\end{align*}
for some constant $c>0$ and all $\beta' \in \R^k$. Hence,
\begin{align*}
    c \Vert \beta(0) - \beta(\xi) \Vert_2^2 & \leq f\left(\beta(\xi), 0\right) - f\left(\beta(0), 0\right) \\
    & = f\left(\beta(\xi), 0\right) - f\left(\beta(\xi), \xi \right) + f\left(\beta(\xi), \xi \right) - f\left(\beta(0), \xi \right) + f\left(\beta(0), \xi \right) - f\left(\beta(0), 0 \right) \\
    & \leq \left\{ \Vert  \beta(\xi) \Vert_2^2 + \Vert  \beta(0) \Vert_2^2 \right\} \Vert  \xi \Vert_2^2,
\end{align*}
where the last inequality follows by $f\left(\beta(\xi), \xi \right) - f\left(\beta(0), \xi \right) \leq 0$.

From the above argument, by the Cauchy–Schwarz and Jensen's inequality we have that
\begin{align*}
   \Vert \widetilde{\beta} - \beta^* \Vert_2 & \lesssim \left\Vert \E\left\{ \left(\sum_j \sqrt{\lambda_j} \text{uf}_j(Z) \right)\bm{b}(W)\right\} \right\Vert_2 \\
   & \leq \left\Vert \sum_j \sqrt{\lambda_j} \text{uf}_j(Z)\right\Vert_{2,\Pb} \sqrt{\sum_{i=1}^k \E\{{\bm{b}(W)_i}^2 \}} \\
   & \leq \left\Vert \sum_j \sqrt{\lambda_j} \text{uf}_j(Z)\right\Vert_{2,\Pb} \left\Vert \bm{b}(W) \right\Vert_{2,\Pb}.
\end{align*}

Further, under the conditions \ref{assumption:A1}-\ref{assumption:A3}, \ref{assumption:A5}, by Theorem \ref{thm:asymptotics}, it follows that
\begin{align*}
   \Vert \beta^* - \widehat{\beta} \Vert_2 = O_\Pb\left(n^{-\frac{1}{2}} +  \max_a\Vert \widehat{\pi}_a -\pi_a \Vert_{2,\Pb} \max \Vert \widehat{\tau} -\tau \Vert_{2,\Pb} \right).
\end{align*}

Since $0 < \alpha < \infty$, we obtain the desired result by putting the pieces together due to Minkowski's inequality.

The part (iii) follows by the exact same logic as used for the part (i).


The part (ii) immediately follows by the fact that
\begin{align*}
    \Pb\left\{\widehat{g}(W) \neq {g^*}(W) \right\} &= \Pb\left\{ \left\vert \widehat{g}(W) - {g^*}(W) \right\Vert_{2,\Pb}\right\}\\
    & \leq \Pb\left[ \mathbbm{1}\left\{ \left\vert \tau(W) \right\Vert_{2,\Pb} \leq \sum \left\vert \tau(W) - \widehat{\beta}^\top\bm{b}(W) \right\Vert_{2,\Pb} \right\} \right] \\ 
    & \lesssim \left\Vert \tau(W) - \widehat{\beta}^\top\bm{b}(W) \right\Vert_{\infty,\Pb}^{\alpha}.
\end{align*}


\end{proof}

\end{document}